\documentclass[preprint,10pt,authoryear]{elsarticle}

\usepackage{amssymb}
\usepackage{amsmath}
\usepackage{enumitem}
\usepackage{verbatim}
\usepackage{cleveref}
\usepackage{graphicx}
\usepackage{color}

\usepackage{amsthm}
\theoremstyle{definition}
\newtheorem{definition}{D}
\theoremstyle{assumption}
\newtheorem{assumption}{A}
\theoremstyle{lemma}
\newtheorem{lemma}{Lemma}
\theoremstyle{proposition}
\newtheorem{proposition}{Proposition}
\theoremstyle{theorem}
\newtheorem{theorem}{Theorem}
\theoremstyle{remark}
\newtheorem{remark}{Remark}

\journal{}

\begin{document}

\begin{frontmatter}



\title{Lyapunov-Based Stabilization and Control of the Stochastic Schr\"odinger Equation}


\author{Peyman Azodi, Alireza Khayatian, Peyman Setoodeh}

\address{}

\begin{abstract}
This paper presents a detailed Lyapunov-based theory to control and stabilize continuously-measured quantum systems, which are driven by Stochastic Schr\"odinger Equation (SSE). Initially, equivalent classes of states of a quantum system are defined and their properties are presented. With the help of equivalence classes of states, we are able to consider global phase invariance of quantum states in our mathematical analysis. As the second mathematical modelling tool, the conventional It\"o formula is further extended to non-differentiable complex functions. Based on this extended It\"o formula, a detailed stochastic stability theory is developed to stabilize the SSE. Main results of this proposed theory are sufficient conditions for stochastic stability and asymptotic stochastic stability of the SSE. Based on the main results, a solid mathematical framework is provided for controlling and analyzing quantum system under continuous measurement, which is the first step towards implementing weak continuous feedback control for quantum computing purposes.
\end{abstract}



\begin{keyword}


Stochastic Lyapunov Theory, Stochastic Stability, Open Quantum Systems, Continuous Weak Measurement, Quantum Control.
\end{keyword}

\end{frontmatter}









\section{Introduction}
\par Building quantum computers have been one of the most notable research efforts in the past decades. The idea of quantum computing started with the famous proposal of Richard Feynman of \textit{Simulating physics with computers} \cite{feynman1982simulating}. Although the original idea of Feynman was to simulate a quantum mechanical system with another quantum system, it turned out that exploiting purely quantum mechanical features is significantly applicable to other engineering demands \cite{dowling2003quantum}. These engineering applications mainly include secure communication \cite{renner2008security}, fast and efficient data processing \cite{divincenzo1995quantum} in addition to simulating nano-scale systems. 
\par Computing with devices at nano scales demands building, manipulating, and efficiently measuring such tiny devices, which is a challenging task. There must be a worthwhile payoff for performing engineering tasks at this level despite these complications. The key element, which makes quantum systems unique, is \textit{quantum coherence}. This phenomenon lives at nano scales and disappears as the system approaches the \textit{classical limit}. This many-fold complexity of systems at nano scales makes it \textit{exponentially} hard to simulate them with classical computers and classical bits. That is why it is almost impossible to simulate and analytically calculate the energy levels of simple molecular structures with classical computers. On the other hand, it has been shown that some classical hard problems can be solved in \textit{polynomial time} on a hypothetical quantum computer \cite{shor1999polynomial}.
\par The lifesaver property of quantum coherence dies out (decoherence happens) in two main situations, namely uncontrolled interaction of the quantum system with environment, which is undesirable and almost unavoidable \cite{viola1999dynamical} and when the system is measured, which is unavoidable since at some point, the results must be acquired from the system. Therefore, measurement alters the state of the system and inevitably destroys its quantum features  \cite{busch2009no}. This seems to be a turning point in quantum computing. Fortunately, the introduction of weak measurement and weak values \cite{aharonov1988result,jozsa2007complex,brun2002simple} turned the page. By weakly measuring a quantum system, we can maintain the quantum coherence at the cost of acquiring noisier information about the system \cite{jacobs2014quantum}. Recently, implementing weak continuous measurement of qubits has attracted theoreticians and experimentalists in the area of quantum computing \cite{muhonen2018coherent,pfender2019high,ran2019error,shojaee2018optimal,gross2018qubit,vijay2012stabilizing}. Noisy information about the system is not the only complication induced by implementing weak continuous measurement;
dynamics of these quantum systems are also governed by Stochastic Schr\"odinger Equation (SSE) \cite{wiseman1996quantum}, which takes into account the back-action of weak continuous measurement in the form of additional stochastic terms to the ordinary Schr\"odinger Equation \cite{breuer2002theory,jacobs2006straightforward,wiseman1994quantum}. Controlling and analyzing the stochastic quantum system governed by SSE, requires more consideration and more sophisticated mathematical tools,  provided by stochastic and nonlinear control theory \cite{sontag2013mathematical,khalil1996noninear,sastry2013nonlinear,khasminskii2011stochastic, kushner1967stochastic}. In this paper, the aim is to provide a solid mathematical framework (through control theory) to analyze, stabilize, and manipulate quantum systems in the presence of weak continuous measurement, in order to ease the way of implementing this framework for quantum computing purposes.  
 \par Lyapunov theory was initially used for quantum systems analysis in \cite{grivopoulos2003lyapunov}. Following this idea, in early 2000's, many researchers started to use Lyapunov theory for controlling and stabilization of Schr\"odinger equation \cite{kuang2008lyapunov,mirrahimi2005lyapunov,wang2010analysis}. Lyapunov theory has been among the important theories to analyze quantum systems in recent years \cite{wang2014optimal,kuang2018lyapunov,ghaeminezhad2018preparation,cardona2020exponential,liu2019filter,jie2018stabilizing}. For instance, Lyapunov theory has been used to stabilize two-level quantum systems \cite{cardona2018exponential,qamar2019observer}. Despite its importance and experimental applications, stabilization of an open quantum system under continuous measurement has not received the attraction it deserves.
 \par In this paper, we use stochastic Lyapunov theory stability analysis of continuously -measured quantum systems. In quantum physics, global phase gauge invariance induces some difficulties to control purposes. This is due to the fact that some states may be distinguishable from a mathematical viewpoint, but represent the same quantum state. In order to overcome this difficulty, equivalent classes of quantum states are defined and some of their useful properties are studied. The conventional It\"o formalism is not capable of evaluating the stochastic increment for the proposed Lyapunov function. Hence, initially, the It\"o formalism is extended to non-analytic complex functions. Then, based on the extended It\"o formula, the stochastic increment of the Lyapunov function can be evaluated. The Lyapunov function considered in this paper, which is defined based on the well-known Hilbert-Schmidt distance \cite{prugovecki1982quantum}, is highly adaptable to the proposed extended It\"o formula. The main results of this paper are two stochastic stability theorems. The first theorem provides the conditions on the desired final state, the Hamiltonian, and the observable that guarantee stochastic stability of the quantum system. In the second part, it is shown that some additional conditions on the control Hamiltonians lead to stochastic asymptotic stability of the quantum system. In this paper, it is assumed that the control system is Equivalent State Controllable (ESC), i.e. it is always possible to find a set of control signals to make the transition between any two arbitrary equivalent classes of states in a finite time interval  \cite{d2007introduction}. The conditions proposed for asymptotic stochastic stability are reasonably more strict than Equivalent State Controllability (ESC) conditions, since in addition to steering the system, these conditions also guarantee stochastic stability.

\par This paper consists of the following sections: In section \ref{P11}, some preliminary background is presented and the problem is formulated. This section mainly introduces continuously-measured quantum systems and their dynamic behaviour. Also, in the problem formulation subsection, some assumptions are defined to be used in the next sections. In section \ref{p12}, the Lyapunov candidate is defined and the It\"o formula is extended to suite non-analytic complex functions. In section \ref{p13}, main results on the stochastic stability of quantum systems are developed. First, two definitions are presented, afterward, the main results are presented within some lemmas and theorems. In section \ref{p14}, the proposed theorem is validated via an illustrative example of a qubit system, and section \ref{p15} provides concluding remarks.

\section {Preliminaries and Problem Formulation}\label{P11}
\subsection{Quantum Dynamical Systems}
\par According to the postulates of quantum mechanics, our state of knowledge about a finite dimensional ($n$-level) quantum system can be described by a normalized vector $\left| \psi  \right\rangle  \in {S^{2n - 1}}$, where ${S^{2n - 1}}=\mathcal{H}$ is the unit hypersphere $\mathbb{C}^{n}$, together with the Euclidean inner product $\left\langle {.} \mathrel{\left | {\vphantom {. .}} \right. \kern-\nulldelimiterspace} {.} \right\rangle $. The time evolution of a closed quantum system is governed by the Schr\"odinger wave equation:
\begin{equation}
\begin{array}{l}
\frac{d}{{dt}}\left| {\psi (t)} \right\rangle  = \frac{{ - i}}{\hbar }H\left| {\psi (t)} \right\rangle , 
\left| {\psi (0)} \right\rangle  = \left| {{\psi _0}} \right\rangle 
\end{array}
\label{10}
\end{equation}
where $H$($iH \in su(n)$) is a bounded (and equivalently compact) self-adjoint operator, called system Hamiltonian. 
\par The relative phase invariance in quantum mechanics has led to some difficulties in modeling quantum control systems. By this invariance, all the states $e^{i\varrho}\left| \psi  \right\rangle$ with $\varrho\in\mathbb{R}$  correspond to the same quantum state. In order to overcome this difficulty, the following definition is adapted:
\begin {definition}
$\left[ {\left| {{\psi }} \right\rangle } \right] \doteq  \left\{ {{e^{i\varrho} }\left| {{\psi }} \right\rangle \left| {\varrho  \in \mathbb{R}} \right.} \right\}$ is the \textit{equivalence class} of $\left| {{\psi }} \right\rangle$ and two different quantum states $\left| {{\psi _1}} \right\rangle$ and $\left| {{\psi _2}} \right\rangle$ are said to be \textit {equivalent} if  $\left| {{\psi _1}} \right\rangle \in \left[ {\left| {{\psi_2 }} \right\rangle } \right]$.
\label{EQ}
\end {definition}
 The quotient space corresponding to this partitioning is infinite dimensional even if we consider a finite dimensional quantum system.\par The following lemma will be useful in the rest of this article:
\begin{lemma}
Consider two non-equivalent quantum states, i.e. $\left| {{\psi _1}} \right\rangle  \notin \left[ {\left| {{\psi _2}} \right\rangle } \right]$, there exists $\epsilon >0$ and a self adjoint operator $H'$(${\left\| {H'} \right\|_{HS}} = 1$, $H'\ne I$)\footnote{${\left\| {.} \right\|_{HS}}$ denotes the Hilbert-Schmidt norm} such that:
\begin{equation}
\left| {{\psi _1}} \right\rangle  ={e^{i\varepsilon H'}}  {\left| {{\psi _2}} \right\rangle }
\end{equation}
Also, neither of $\left| {{\psi _1}} \right\rangle$ and $\left| {{\psi _2}} \right\rangle$ is an eigenket of $H'$.
\label{lemma10}
\end{lemma}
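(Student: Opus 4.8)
The plan is to realize the required relation by exhibiting a unitary $U$ with $U|\psi_2\rangle = |\psi_1\rangle$, writing it as $U = e^{iK}$ for a self-adjoint $K$, and rescaling to normalize. I would start by noting the elementary consequence of non-equivalence that $|\psi_1\rangle$ and $|\psi_2\rangle$ are not proportional: if $|\psi_1\rangle = c\,|\psi_2\rangle$ then $|c| = 1$ because both are unit vectors, so $c = e^{i\varrho}$ and $|\psi_1\rangle \in [\,|\psi_2\rangle\,]$, contradicting the hypothesis. In particular $|\psi_1\rangle \ne |\psi_2\rangle$.

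Since the unitary group acts transitively on the unit sphere $S^{2n-1}$, there is a unitary $U$ with $U|\psi_2\rangle = |\psi_1\rangle$, and $U \ne I$ by the previous observation. Taking a spectral decomposition $U = \sum_j e^{i\theta_j} |e_j\rangle\langle e_j|$ with $\theta_j \in (-\pi,\pi]$ and setting $K := \sum_j \theta_j\,|e_j\rangle\langle e_j|$ gives a self-adjoint operator with $e^{iK} = U$, and $K \ne 0$ since $K = 0$ would force $U = I$. I would then put $\varepsilon := \|K\|_{HS} > 0$ and $H' := K/\varepsilon$, so that $\|H'\|_{HS} = 1$ and $e^{i\varepsilon H'} = e^{iK} = U$, whence $|\psi_1\rangle = e^{i\varepsilon H'}|\psi_2\rangle$. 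The requirement $H' \ne I$ (indeed that $H'$ is not any scalar operator) also follows from non-equivalence: $H' = cI$ would make $U = e^{i\varepsilon c}I$ scalar, so $|\psi_1\rangle = e^{i\varepsilon c}|\psi_2\rangle \in [\,|\psi_2\rangle\,]$.

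For the last assertion, note that $H'$, $K$, and $U = e^{iK}$ all share the same eigenvectors, so it is enough to see that neither $|\psi_1\rangle$ nor $|\psi_2\rangle$ is an eigenvector of $U$. But $U|\psi_2\rangle = e^{i\theta}|\psi_2\rangle$ would give $|\psi_1\rangle = e^{i\theta}|\psi_2\rangle$, and $U|\psi_1\rangle = e^{i\mu}|\psi_1\rangle$ would give $|\psi_2\rangle = U^{-1}|\psi_1\rangle = e^{-i\mu}|\psi_1\rangle$; either one contradicts non-equivalence. In fact this argument shows that \emph{every} unitary carrying $|\psi_2\rangle$ to $|\psi_1\rangle$ already has both vectors off its eigenspaces, so no careful selection of $U$ is needed.

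The building blocks — transitivity of $U(n)$ on $S^{2n-1}$, the existence of a self-adjoint (principal branch) logarithm of a unitary, and the invariance of eigenspaces under functional calculus and positive rescaling — are all standard, so I do not foresee a genuine obstacle; the only point worth stating explicitly is why the normalization $\|H'\|_{HS} = 1$ is costless, namely that non-equivalence forces $U \ne I$, hence $K \ne 0$, so one may divide by $\|K\|_{HS}$. If a more self-contained construction is desired in place of the transitivity argument, one can instead work inside the two-dimensional span of $|\psi_1\rangle$ and $|\psi_2\rangle$, build there a $2\times 2$ unitary rotating one unit vector onto the other, extend it by the identity on the orthogonal complement, and run the same ``take the logarithm and normalize'' step.
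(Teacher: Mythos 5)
Your proposal is correct and follows essentially the same route as the paper's (sketched) proof: transitivity of the unitary group on the unit sphere to produce $U$ with $U\left| {{\psi _2}} \right\rangle = \left| {{\psi _1}} \right\rangle$, a self-adjoint logarithm via the spectral decomposition, normalization by the Hilbert--Schmidt norm, and contradiction with non-equivalence to rule out $H'=I$ and the eigenket cases. In fact your write-up is more complete than the paper's two-line sketch --- in particular the principal-branch choice $\theta_j\in(-\pi,\pi]$, which makes the eigenspaces of $U$, $K$ and $H'$ coincide, is exactly the detail needed to transfer the ``not an eigenket'' claim from $U$ to $H'$; the only cosmetic difference is that you work in $U(n)$ rather than $SU(n)$, which is harmless since the lemma does not require $iH'\in su(n)$.
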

\begin{proof}[Sketch of the proof]
The transitivity property of $SU(n)$ as a Lie transformation group and its Homeomorphism to $su(n)$ by the exponential map provide the main idea for the proof. Also, both of the following conditions:
\begin{enumerate}
\item {$H'= I$}
\item{Any one of $\left| {{\psi _1}} \right\rangle$ or $\left| {{\psi _2}} \right\rangle$ being an eigenket for $H'$}
\end{enumerate}
contradicts the non-equivalence property. 
\end{proof}

\subsection{Continuously measured quantum systems}
\label{COQS}
\par By the early postulates of quantum mechanics, the measurement process, projects the system into the eigenspace corresponding to the resulting eigenket of the observable. This viewpoint is known as projective measurement or also, strong Von Neumann measurement. By this postulate, the pre-measure quantum state is completely missed. Developing Positive Operator Valued Measure (POVM)  led to a more general model for quantum measurement, known as weak measurement \cite{brun2002simple,aharonov1988result}. By weak measurement, the pre-measure state, does not necessarily collapse to the eigenstates of the observable but deviates from its initial state. This deviation depends on the strength of the measurement and the distribution spread of the Gaussian measurement. Moreover, the obtained information is unsharp. This phenomena obeys the Busch's theorem: ``no information without disturbance"\cite {busch2009no}. 
\par A continuously measured quantum system is the one being continuously weakly measured. Let us denote the measurement observable by $X$, which is neccesarily self adjoint. In \cite{jacobs2006straightforward,jacobs2014quantum}, a Gaussian approach to weak measurement is followed. In this approach, the deviation of the premeasure state depends on the measurement strength and the energy shifts between the eigenkets of $X$ weighted by the projection of pre-measure state into the eigenbasis generated by $X$. In the case of continuous measurement, this approach leads to the following stochastic dynamic behaviour, known as Stochastic Schr\"odinger Equation:
\begin{equation}
d\left| \psi  \right\rangle  = \left( {( - k{{(X - \left\langle X \right\rangle )}^2})dt + \sqrt {2k} (X - \left\langle X \right\rangle )dW} \right)\left| \psi  \right\rangle,
\label{SSE1}
\end{equation}
where $k$ is the strength of the measurement, $dW$ is the standard Wiener process, and $\left\langle X \right\rangle$ is the expected value of the observable $X$. Also, it is shown that the contribution of the free evolution term in (\ref{10}) added together with (\ref{SSE1}), (i.e. the complete evolution) is given by: 
\begin{equation}
\begin{gathered}
  d\left| \psi  \right\rangle  = \left( {(\frac{{ - i}}{\hbar }H - k{{(X - \left\langle X \right\rangle )}^2})dt + \sqrt {2k} (X - \left\langle X \right\rangle )dW} \right)\left| \psi  \right\rangle.  \\ 
  \left| {\psi (0)} \right\rangle  = \left| {{\psi _0}} \right\rangle {\text{; a}}{\text{.s}}. \\ 
\end{gathered}
\label{SSE}
\end{equation}
\par This stochastic unitary evolution is a more general form of strong measurement. One may deduce from (\ref{SSE}) that in the case of great measurement strength, the quantum system is projected to the eigenkets of $X$ in a short time. For convenience let us use the following notation:
\begin{equation}
\hat f(\left| \psi  \right\rangle ) \doteq \left( {\frac{{ - i}}{\hbar }H - k{{(X - \left\langle X \right\rangle )}^2}} \right)\left| \psi  \right\rangle,\hat g(\left| \psi  \right\rangle ) \doteq \left( {\sqrt {2k} (X - \left\langle X \right\rangle )} \right)\left| \psi  \right\rangle,
\label{fg}
\end{equation}
where $\hat f$ and $\hat g$ are the drift and diffusion terms, respectively. Obviously, $\hat f$ and $\hat g$ admit the Lipschitz continuity and growth condition, thus, the existence and uniqueness of the strong solution of (\ref{SSE}) and the quantum trajectories are guaranteed.

\subsection {Problem formulation}
Dynamic behavior of an open quantum system with Hamiltonian $H$, which is continuously measured by $X$, was studied in section \ref{COQS}.  In order to address the control issue, consider the following assumptions:
\begin{assumption}
Consider the set of control signals $U \doteq \left\{ {{u_k}(t)\left| {k = 1,...,m;{u_k} \in {L^2}(\mathbb{R})}\right.}\right\}$. Each of the Lebesgue measurable control signals ${u_k}(t)$ is associated with a compact control Hamiltonian $H_k$ ($iH_k \in su(n)$) as its coefficient and they appear in the Hamiltonian in an affine manner:
\begin{equation}
{ H}(U) = {{ H}_0} + \sum\limits_k {{u_k}(t)} {{ H}_k}.
\label{HU}
\end{equation}
The free evolution Hamiltonian $H(0)=H_0$ describes the system in the absence of a control field.
\label{A10}
\end{assumption}
\par The previous assumption is not restrictive. In practice, the manipulation signals usually appear in an affine form (e.g. the effect of an interacting magnetic field on a spin system). Although the observable $X$ participates in the manipulation, in this article, we just address the role of control signals in the stabilization procedure. In the rest of this article, our aim is to analyze the stability properties of the quantum trajectories,  driven by the nonlinear stochastic differential equation (\ref{SSE}). The quantum trajectories start from the initial state $\left| {{\psi _0}} \right\rangle$ almost surely and the aim is to manipulate them to the equivalent class of the desired final state $\left[ {\left| {{\psi _f}} \right\rangle } \right]$. For convenience, the point spectrum (the compactness of $H_0$ and the separability of the underlying Hilbert space imply that the continuous and residual spectrum of $H_0$ are empty) of $H_0$ is denoted by $\sigma ({H_0}) \doteq \left\{ {{\lambda _i}} \right\}$ and the eigenspace conjugate to $\lambda _i$ is denoted by $\sigma _{\lambda_i} ({H_0}) $.
\begin{assumption}
The desired final state ${\left| {{\psi _f}} \right\rangle }$ (and also $\left[ {\left| {{\psi _f}} \right\rangle } \right]$)\footnote{All of the equivalent states of ${\left| {{\psi _f}} \right\rangle }$ are in the same eigenspace of $H_0$ conjugate to ${\lambda _{Hf}}$} is an eigenstate of $H_0$ with eigenvalue ${\lambda _{Hf}}$( i.e.,$\left[ {\left| {{\psi _f}} \right\rangle } \right]\in \sigma _{\lambda_{Hf}} ({H_0})$) and the corresponding eigenspace is degenerate.
\label{A20}
\end{assumption}

\begin{assumption}
The desired final state ${\left| {{\psi _f}} \right\rangle }$ is not an eigenstate of $H_k$ for at least one $k$ i.e., $\exists k;\left[ {\left| {{\psi _f}} \right\rangle } \right] \notin \bigcup\limits_j {{\sigma _{{\lambda _j}}}({H_k})} $.
\label{A30}
\end{assumption}
In the upcoming sections, it will be shown that the control Hamiltonians containing ${\left| {{\psi _f}} \right\rangle }$ as their eigenstate do not contribute to the manipulation process.
\begin{assumption}
The desired final state ${\left| {{\psi _f}} \right\rangle }$  is an eigenstate of $X$ with eigenvalue ${\lambda _{Xf}}$. i.e., $\left[ {\left| {{\psi _f}} \right\rangle } \right]\in \sigma _{\lambda_{Xf}} ({X})$.
\label{A40}
\end{assumption}
\begin{assumption}
The control Hamiltonian set $\left\{ {{H_k}} \right\}$ has at least $n-1$ elements, any of which does not have ${\left| {{\psi _f}} \right\rangle }$ as its eigenket, and the Hamiltonians $\left\{ {{H_0},{H_1},...,{H_{n - 1}}} \right\}$ constitute an $n$-element linearly independent set.
\label{A50}
\end{assumption}
\par In the rest of this article, each of these assumptions will help to derive the desired results. In the next section, the stabilization process and the stochastic boundedness analysis are studied for the formulated problem by the use of extended Lyapunov theory.

\section{Extended stochastic Lyapunov theory}\label{p12}
The process of designing control signals $U$ to achieve the desired goals, is inspired by stochastic Lyapunov theory. This theory is highly dependent on the It\"o formula, which gives the increment of the Lyapunov function according to a continuous Feller (and thus, strong Markov) stochastic process. In the problem of manipulating a quantum system, we are faced with a difficulty in using the conventional It\"o formula. In this section, it is shown that the proposed Lyapunov function is not differentiable when we employ the complex numbers as the field for our Hilbert space.
\subsection{Lyapunov function}
In this article, a Lyapunov function is employed, which is based on maximizing the transition probability to the desired final state:  
\begin{equation}
V(\left| \psi  \right\rangle ) = \frac{1}{2}\left( {1 - {{\left| {\left\langle {{{\psi _f}}}
 \mathrel{\left | {\vphantom {{{\psi _f}} \psi }}
 \right. \kern-\nulldelimiterspace}
 {\psi } \right\rangle } \right|}^2}} \right)
\label{LYAP}
\end{equation}
This Lyapunov function (which is obviously positive) is inspired by the Hilbert-Schmidt norm of an operator. In what follows, the advantages of this Lyapunov candidate will become clear. First, the mathematical properties of (\ref{LYAP}) is discussed within some lemmas:
\begin{lemma}
Consider the Lyapunov function in (\ref{LYAP}), the following statements are equivalent:
\begin{enumerate}[label=(\alph*)]
\item{$V(\left| \psi  \right\rangle ) =0$}\label{a}
\item {$\left| \psi  \right\rangle  = \left[ {\left| {\psi {}_f} \right\rangle } \right]$}\label{b}
\end{enumerate}
\label{L20}
\end{lemma}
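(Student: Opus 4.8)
The plan is to unwind the definition of $V$ and reduce the equivalence to the equality case of the Cauchy--Schwarz inequality. Since $V(\left|\psi\right\rangle)=\tfrac12\bigl(1-|\langle\psi_f|\psi\rangle|^2\bigr)$ and $0\le|\langle\psi_f|\psi\rangle|\le 1$ (both kets lying on $S^{2n-1}$), statement \ref{a} holds exactly when the bracket vanishes, i.e. when $|\langle\psi_f|\psi\rangle|^2=1$, equivalently $|\langle\psi_f|\psi\rangle|=1$. So the whole lemma amounts to showing that $|\langle\psi_f|\psi\rangle|=1$ characterizes membership of $\left|\psi\right\rangle$ in $[\left|\psi_f\right\rangle]$.

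For the implication \ref{b}$\Rightarrow$\ref{a} I would argue directly: if $\left|\psi\right\rangle\in[\left|\psi_f\right\rangle]$ then by \cref{EQ} there is $\varrho\in\mathbb{R}$ with $\left|\psi\right\rangle=e^{i\varrho}\left|\psi_f\right\rangle$, whence $\langle\psi_f|\psi\rangle=e^{i\varrho}\langle\psi_f|\psi_f\rangle=e^{i\varrho}$ because $\left|\psi_f\right\rangle$ is normalized; thus $|\langle\psi_f|\psi\rangle|=1$ and $V(\left|\psi\right\rangle)=0$.

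For \ref{a}$\Rightarrow$\ref{b} I would invoke the equality condition in Cauchy--Schwarz. With $\|\psi\|=\|\psi_f\|=1$, the bound $|\langle\psi_f|\psi\rangle|\le\|\psi_f\|\,\|\psi\|=1$ is attained iff $\left|\psi\right\rangle$ and $\left|\psi_f\right\rangle$ are linearly dependent over $\mathbb{C}$, so $\left|\psi\right\rangle=c\left|\psi_f\right\rangle$ for some $c\in\mathbb{C}$; taking norms gives $|c|=1$, hence $c=e^{i\varrho}$ for some $\varrho\in\mathbb{R}$, and therefore $\left|\psi\right\rangle\in[\left|\psi_f\right\rangle]$.

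This argument is essentially routine, and I do not expect a genuine obstacle; the only point that needs care is reading the equivalence class of \cref{EQ} as the orbit of $\left|\psi\right\rangle$ under the global phase group $\{e^{i\varrho}:\varrho\in\mathbb{R}\}$ (the relative phase invariance noted before \cref{EQ}), which is precisely what makes the Cauchy--Schwarz equality case land inside $[\left|\psi_f\right\rangle]$ rather than merely in the complex line through $\left|\psi_f\right\rangle$.
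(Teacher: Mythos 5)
Your proposal is correct and follows essentially the same route as the paper: reduce $V(\left|\psi\right\rangle)=0$ to $|\langle\psi_f|\psi\rangle|=1$ and apply the equality case of Cauchy--Schwarz together with normalization to get $\left|\psi\right\rangle=c\left|\psi_f\right\rangle$ with $|c|=1$, hence membership in the equivalence class. The only difference is that you spell out the (b)$\Rightarrow$(a) direction and the phase-group reading of the equivalence class, which the paper treats as obvious.
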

\begin{proof}
\ref{b}$\rightarrow$\ref{a} is obvious. For \ref{a}$\rightarrow$\ref{b}, the sufficient condition for \ref{a}  is ${\left| {\left\langle {{{\psi _f}}}
 \mathrel{\left | {\vphantom {{{\psi _f}} \psi }}
 \right. \kern-\nulldelimiterspace}
 {\psi } \right\rangle } \right|^2} = 1$. Using the fact that both $\left| \psi  \right\rangle$ and $\left| \psi_f  \right\rangle$ are normalized, by Cauchy$-$Bunyakovsky$-$Schwarz inequality, the necessary and sufficient condition is $\left| \psi  \right\rangle = c \left| \psi_f  \right\rangle$ with $|c|=1$, which means \ref{b} by D\ref{EQ}.
\end{proof}
This lemma plays an important role in guaranteeing asymptotic stability properties in the rest of this article. By this lemma, vanishing Lyapunov function exclusively describes the convergence of the quantum trajectories to the equivalence class of the desired final state. The following lemma will play an important role in asymptotic stability of the quantum trajectories in upcoming sections.
\begin{lemma}
Consider the Lyapunov function in (\ref{LYAP}), suppose that $\left| {{\psi}} \right\rangle  \notin \left[ {\left| {{\psi _f}} \right\rangle } \right]$ and $0 < R \leqslant \left\| {\left| \psi  \right\rangle  - [\left| {{\psi _f}} \right\rangle] } \right\|<2$ ($\left| \psi  \right\rangle$ is not in an open $R$-neighbourhood of the set $[\left| {{\psi _f}} \right\rangle]$), then $V(\left| \psi  \right\rangle )$ is bounded away from zero i.e., there exists $\nu (R)>0$ such that $\nu (R) \leqslant V(\left| \psi  \right\rangle )$.
\label{L30}
\end{lemma}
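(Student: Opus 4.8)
The plan is to make the link between $V$ and the Hilbert--space distance to the class $[\left|\psi_f\right\rangle]$ completely explicit, and then simply read off the bound. Writing $\langle\psi_f|\psi\rangle = r e^{i\theta}$ with $r \doteq |\langle\psi_f|\psi\rangle| \in [0,1]$, I would first expand $\bigl\|\left|\psi\right\rangle - e^{i\varrho}\left|\psi_f\right\rangle\bigr\|^2$, using only that both kets are normalized; this gives $2 - 2r\cos(\varrho-\theta)$, which is minimized over $\varrho\in\mathbb{R}$ at $\varrho=\theta$. Since every $e^{i\varrho}\left|\psi_f\right\rangle$ lies in $[\left|\psi_f\right\rangle]$ (and this circle is compact, so the infimum is attained), one gets the identity
\[
\bigl\| \left|\psi\right\rangle - [\left|\psi_f\right\rangle] \bigr\|^2 \;=\; \min_{\varrho\in\mathbb{R}} \bigl\|\left|\psi\right\rangle - e^{i\varrho}\left|\psi_f\right\rangle\bigr\|^2 \;=\; 2(1-r).
\]

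Next I would substitute the hypothesis $R \le \bigl\|\left|\psi\right\rangle - [\left|\psi_f\right\rangle]\bigr\|$ into this identity, which forces $1 - r \ge R^2/2$. Factoring the Lyapunov function as $V(\left|\psi\right\rangle) = \tfrac12(1-r^2) = \tfrac12(1-r)(1+r)$ and bounding $1+r \ge 1$ (valid since $r\ge 0$), I conclude $V(\left|\psi\right\rangle) \ge R^2/4$, so the lemma holds with the explicit choice $\nu(R) \doteq R^2/4$, which is strictly positive precisely because $R>0$.

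I do not expect a genuine obstacle here; the only things worth double-checking are that the minimum over the global phase $\varrho$ is actually attained (clear, since $\cos$ attains its maximum) and that the resulting constant is strictly positive --- this is where $R>0$ enters, consistently with Lemma \ref{L20} (for $R=0$ the bound degenerates to $V\ge 0$, which indeed need not be bounded away from zero). Note also that the upper bound $\bigl\|\left|\psi\right\rangle-[\left|\psi_f\right\rangle]\bigr\|<2$ in the statement plays no role in this direction; in fact the identity above shows the distance can never exceed $\sqrt2$. A fully qualitative alternative, if one did not want the explicit constant, would be to observe that $\{\left|\psi\right\rangle \in S^{2n-1} : \mathrm{dist}(\left|\psi\right\rangle,[\left|\psi_f\right\rangle]) \ge R\}$ is a closed subset of the compact sphere, hence compact, that $V$ is continuous on it, and that $V$ is strictly positive there by Lemma \ref{L20}; a continuous positive function on a nonempty compact set attains a positive minimum, which may be taken as $\nu(R)$. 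I would nonetheless prefer the direct computation, since the concrete value $R^2/4$ is convenient for the quantitative stability estimates later in the paper.
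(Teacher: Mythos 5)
Your proof is correct and follows essentially the same route as the paper's: both expand the squared distance to the phase orbit of $\left|\psi_f\right\rangle$, optimize over the global phase to obtain $1-|\langle \psi_f | \psi\rangle| \geqslant R^2/2$, and convert this into a positive lower bound on $V$. The only differences are cosmetic --- you bound $1+r \geqslant 1$ to get the cleaner constant $\nu(R)=R^2/4$, whereas the paper squares directly and states $\nu(R)=R^2-R^4/4$ (which in fact drops the factor $\tfrac12$ from $V=\tfrac12(1-r^2)$, so your explicit constant is actually the safer one).
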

\begin{proof}
The assumption reads ${R^2} \leqslant \left\langle {{\psi  - {e^{-i\varepsilon }}{\psi _f}}}
 \mathrel{\left | {\vphantom {{\psi  - {e^{i\varepsilon }}{\psi _f}} {\psi  - {e^{i\varepsilon }}{\psi _f}}}}
 \right. \kern-\nulldelimiterspace}
 {{\psi  - {e^{i\varepsilon }}{\psi _f}}} \right\rangle$ for all real $\epsilon$, which leads to:
\begin{equation*}
{R^2} \leqslant 2 - 2\operatorname{Re} ({e^{ - i\varepsilon }}\left\langle {{{\psi _f}}}
 \mathrel{\left | {\vphantom {{{\psi _f}} \psi }}
 \right. \kern-\nulldelimiterspace}
 {\psi } \right\rangle ).
\end{equation*}
Take $\left\langle {{{\psi _f}}} \mathrel{\left | {\vphantom {{{\psi _f}} \psi }} \right. \kern-\nulldelimiterspace} {\psi } \right\rangle  = r{e^{i\theta }}$, where $r$ and $\theta$ depend on $\left| \psi  \right\rangle$, and choose $\epsilon =-\theta$. Thus, for all admissible $\left| \psi  \right\rangle$, one may write:
\begin{equation*}
r \leqslant 1 - \frac{{{R^2}}}{2}.
\end{equation*}
The above inequality reads:
\begin{equation*}
\mathop {\sup }\limits_{{\text{admissible }}\left| \psi  \right\rangle } \left| {\left\langle {{{\psi _f}}}
 \mathrel{\left | {\vphantom {{{\psi _f}} \psi }}
 \right. \kern-\nulldelimiterspace}
 {\psi } \right\rangle } \right| \leqslant 1 - \frac{{{R^2}}}{2}
\end{equation*}
Also, it can be shown that the supremum takes the RHS value on the boundary of admissible closed set of $\left| \psi  \right\rangle$. Thus, one may define $ \nu (R) \doteq {R^2} - \frac{{{R^4}}}{4}$.
\end{proof}
\subsection{Extended It\"o formula}
The common well-known It\"o formula, gives the increment of a scalar function, of a strong Markov process driven by It\"o form of stochastic differential equations. It is necessary that the function be twice-differentiable. When we treat a complex-field Hilbert space, our definition on the differentiablility changes (The complex function\footnote{This terminology is used in place of "Function of a complex variable" in this article.} must necessarily admit Cauchy-Riemann condition). The sufficient condition for extending the common It\"o formalism to the complex case is the holomorphism of the complex function, which is highly restrictive. The proposed Lyapunov function (\ref{LYAP}), is neither holomorphic, nor even differentiable in its domain (of course employing a holomorphic Lyapunov function is very restrictive and does not necessarily satisfy the mentioned conditions). In this section, the common It\"o formula is extended to undifferentiable complex functions. In order to proceed, the following definitions, which are induced from G\^ateaux differentiation, are presented:
\begin{definition} 
Consider the complex (not necessarily differentiable) function $V(\left| \psi  \right\rangle ):{S^{2n - 1}} \mapsto \mathbb{R}$.
\begin{enumerate}[label=(\roman*)]
\item {If there exists a functional ${\nabla }V:{\mathbb{C}^n} \mapsto \mathbb{C}^n$, independent of $\left|\partial \psi  \right\rangle$ such that the following limit exists for any fixed $ \left|\partial \psi  \right\rangle  \in {\mathbb{C}^n}$:}
\begin{equation}
\mathop {\lim }\limits_{h \downarrow 0} \frac{{V(\left| \psi  \right\rangle  + h\left| {\partial \psi } \right\rangle ) - V(\left| \psi  \right\rangle ) - h\left( {{\nabla }V(\left| \psi  \right\rangle )} \right)\left| {\partial \psi } \right\rangle }}{h} = 0,
\end{equation}
then ${\nabla }V(\left| \psi  \right\rangle )$ is called the \textit{directional gradient} in the direction $\left|\partial \psi  \right\rangle$.

\item{Assuming that ${\nabla }V(\left| \psi  \right\rangle )$ exists, if there exists an operator ${\nabla^{2} }V(\left| \psi  \right\rangle ):{\mathbb{C}^n} \mapsto \mathbb{C}^{n\times n}$}, independent of $\left|\partial \psi  \right\rangle$ such that the following limit exists:
\begin{equation}
\mathop {\lim }\limits_{h \downarrow 0} \frac{{V(\left| \psi  \right\rangle  + h\left| {\partial \psi } \right\rangle ) - V(\left| \psi  \right\rangle ) - h\left( {{\nabla }V(\left| \psi  \right\rangle )} \right)\left| {\partial \psi } \right\rangle  - \frac{{{h^2}}}{2}\left\langle {\partial \psi } \right|\left( {{\nabla ^2}V(\left| \psi  \right\rangle )} \right)\left| {\partial \psi } \right\rangle }}{{{h^2}}} = 0
\label{SO}
\end{equation}
then, ${\nabla^2 }V(\left| \psi  \right\rangle )$ is called the \textit{second order directional gradient} in the direction $\left|\partial \psi  \right\rangle$.

\end{enumerate}
\label{DDG}
\end{definition}
The previous definitions of the directional gradients, pave the way for analyzing the functional properties of (\ref{LYAP}). The directional gradients for (\ref{LYAP}) are evaluated as:
\begin{equation}
\begin{gathered}
  {\nabla }V(\left| \psi  \right\rangle ) =  - \operatorname{Re} (\left\langle {\psi }
 \mathrel{\left | {\vphantom {\psi  {{\psi _f}}}}
 \right. \kern-\nulldelimiterspace}
 {{{\psi _f}}} \right\rangle \left\langle {{\psi _f}} \right|)  \\ 
  {\nabla ^2}V(\left| \psi  \right\rangle ) =  - \left| {{\psi _f}} \right\rangle \left\langle {{\psi _f}} \right| \\ 
\end{gathered}
\label{DG}
\end{equation}
where $\operatorname{Re} (\left\langle . \right|)\doteq \frac{1}{2}\left( {\left\langle {.}
 \mathrel{\left | {\vphantom {.\psi }}
 \right. \kern-\nulldelimiterspace}
 {\psi } \right\rangle  + \left\langle {\psi }
 \mathrel{\left | {\vphantom {\psi .}}
 \right. \kern-\nulldelimiterspace}
 {.}\right\rangle } \right)$ which acts on $\left| \psi  \right\rangle$.  In the next proposition, it is shown that first and second order gradients will help us to exactly evaluate the perturbed Lyapunov function. 
\begin{proposition}
For the proposed Lyapunov function (\ref{LYAP}), define ${d_{\partial \left| \psi  \right\rangle }}V(\left| \psi  \right\rangle )\doteq {V(\left| \psi  \right\rangle  + \left| {\partial \psi } \right\rangle ) - V(\left| \psi  \right\rangle )}$ for sufficiently small  $\left| {\partial \psi } \right\rangle$, then with the directional gardients in (\ref{DG}), the following equality holds:
\begin{equation}
{d_{\left| {\partial \psi } \right\rangle }}V(\left| \psi  \right\rangle ) = \left( {{\nabla }V(\left| \psi  \right\rangle )} \right)\left| {\partial \psi } \right\rangle  + \frac{1}{2}\left\langle {\partial \psi } \right|\left( {{\nabla ^2}V(\left| \psi  \right\rangle )} \right)\left| {\partial \psi } \right\rangle
\label{110}
\end{equation}
\end{proposition}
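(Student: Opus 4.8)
The plan is to exploit the fact that, written in the real coordinates of $\left|\psi\right\rangle$, the Lyapunov function \ref{LYAP} is a polynomial of degree exactly two, so the increment $d_{\left|\partial\psi\right\rangle}V(\left|\psi\right\rangle)$ under a finite perturbation has no terms beyond second order and \ref{110} is simply the termination of its Taylor expansion. Concretely, I would set $a \doteq \left\langle\psi_f\middle|\psi\right\rangle$ and $b \doteq \left\langle\psi_f\middle|\partial\psi\right\rangle$, so that $\left\langle\psi_f\middle|\psi+\partial\psi\right\rangle = a+b$ and
\[
\bigl|\left\langle\psi_f\middle|\psi+\partial\psi\right\rangle\bigr|^2 = |a+b|^2 = |a|^2 + 2\operatorname{Re}(\bar a b) + |b|^2 .
\]
Subtracting $V(\left|\psi\right\rangle) = \tfrac12\bigl(1-|a|^2\bigr)$ from $\tfrac12\bigl(1-|a+b|^2\bigr)$ then gives, with no remainder,
\[
d_{\left|\partial\psi\right\rangle}V(\left|\psi\right\rangle) = -\operatorname{Re}(\bar a b) - \tfrac12 |b|^2 .
\]

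Next I would evaluate the right-hand side of \ref{110} using the gradients in \ref{DG}. For the first-order term, $\nabla V(\left|\psi\right\rangle) = -\operatorname{Re}\bigl(\left\langle\psi\middle|\psi_f\right\rangle\left\langle\psi_f\right|\bigr)$ applied to $\left|\partial\psi\right\rangle$ yields $-\operatorname{Re}\bigl(\left\langle\psi\middle|\psi_f\right\rangle\left\langle\psi_f\middle|\partial\psi\right\rangle\bigr) = -\operatorname{Re}(\bar a b)$, since $\left\langle\psi\middle|\psi_f\right\rangle = \bar a$. For the second-order term, $\tfrac12\left\langle\partial\psi\right|\bigl(-\left|\psi_f\right\rangle\left\langle\psi_f\right|\bigr)\left|\partial\psi\right\rangle = -\tfrac12\bigl|\left\langle\psi_f\middle|\partial\psi\right\rangle\bigr|^2 = -\tfrac12 |b|^2$. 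Adding the two reproduces exactly the expression for $d_{\left|\partial\psi\right\rangle}V(\left|\psi\right\rangle)$ obtained above, which establishes \ref{110}.

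I would close with two remarks. First, the identity in fact holds for every $\left|\partial\psi\right\rangle\in\mathbb{C}^n$, not only for small perturbations; the ``sufficiently small'' qualifier only serves to keep $\left|\psi\right\rangle+\left|\partial\psi\right\rangle$ within the region of interest. Second, \ref{110} retroactively certifies that the expressions in \ref{DG} genuinely are the directional gradient and second-order directional gradient of Definition \ref{DDG}: substituting $h\left|\partial\psi\right\rangle$ for $\left|\partial\psi\right\rangle$ in \ref{110} shows that the numerator in part (i) of \ref{DDG} equals $\tfrac{h^2}{2}\left\langle\partial\psi\right|\bigl(\nabla^2 V(\left|\psi\right\rangle)\bigr)\left|\partial\psi\right\rangle$ and the numerator in part (ii) equals $0$ identically, so both limits vanish.

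The computation presents no essential obstacle; the only points needing care are that $\nabla V$ is real-linear but not complex-linear in $\left|\partial\psi\right\rangle$ (consistent with Definition \ref{DDG}, where only real increments $h\downarrow 0$ appear) and the bookkeeping of the $\operatorname{Re}(\left\langle\cdot\right|)$ notation fixed after \ref{DG}. The conceptual reason the proposition is an exact equality rather than an approximation is precisely that the Hilbert–Schmidt-type Lyapunov function \ref{LYAP} is quadratic in $\left|\psi\right\rangle$.
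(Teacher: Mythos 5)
Your computation is correct and is precisely the ``straightforward'' verification the paper omits: expanding $\bigl|\left\langle\psi_f\middle|\psi+\partial\psi\right\rangle\bigr|^2=|a|^2+2\operatorname{Re}(\bar a b)+|b|^2$ and matching the cross term and the $|b|^2$ term with the first- and second-order directional gradients of \ref{DG}. Your added observations (exactness for arbitrary $\left|\partial\psi\right\rangle$ because $V$ is quadratic, and the retroactive verification of \ref{DG} against Definition \ref{DDG}) are sound and consistent with the paper's intent.
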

\begin{proof} 
\[\begin{gathered}
  {d_{\left| {\partial \psi } \right\rangle }}V(\left| \psi  \right\rangle ) = \frac{{ - 1}}{2}\left( {\left\langle {{{\psi _f}}}
 \mathrel{\left | {\vphantom {{{\psi _f}} {\psi  + \partial \psi }}}
 \right. \kern-\nulldelimiterspace}
 {{\psi  + \partial \psi }} \right\rangle \left\langle {{\psi  + \partial \psi }}
 \mathrel{\left | {\vphantom {{\psi  + \partial \psi } {{\psi _f}}}}
 \right. \kern-\nulldelimiterspace}
 {{{\psi _f}}} \right\rangle  - \left\langle {{{\psi _f}}}
 \mathrel{\left | {\vphantom {{{\psi _f}} \psi }}
 \right. \kern-\nulldelimiterspace}
 {\psi } \right\rangle \left\langle {\psi }
 \mathrel{\left | {\vphantom {\psi  {{\psi _f}}}}
 \right. \kern-\nulldelimiterspace}
 {{{\psi _f}}} \right\rangle } \right) \hfill \\
  {\text{                 }} = \frac{{ - 1}}{2}\left( {\left\langle {{{\psi _f}}}
 \mathrel{\left | {\vphantom {{{\psi _f}} \psi }}
 \right. \kern-\nulldelimiterspace}
 {\psi } \right\rangle \left\langle {{\partial \psi }}
 \mathrel{\left | {\vphantom {{\partial \psi } {{\psi _f}}}}
 \right. \kern-\nulldelimiterspace}
 {{{\psi _f}}} \right\rangle  + \left\langle {{{\psi _f}}}
 \mathrel{\left | {\vphantom {{{\psi _f}} {\partial \psi }}}
 \right. \kern-\nulldelimiterspace}
 {{\partial \psi }} \right\rangle \left\langle {\psi }
 \mathrel{\left | {\vphantom {\psi  {{\psi _f}}}}
 \right. \kern-\nulldelimiterspace}
 {{{\psi _f}}} \right\rangle  + \left\langle {{{\psi _f}}}
 \mathrel{\left | {\vphantom {{{\psi _f}} {\partial \psi }}}
 \right. \kern-\nulldelimiterspace}
 {{\partial \psi }} \right\rangle \left\langle {{\partial \psi }}
 \mathrel{\left | {\vphantom {{\partial \psi } {{\psi _f}}}}
 \right. \kern-\nulldelimiterspace}
 {{{\psi _f}}} \right\rangle } \right) \hfill \\
  {\text{                 }} = \left( {\nabla V(\left| \psi  \right\rangle )} \right)\left| {\partial \psi } \right\rangle  + \frac{1}{2}\langle \partial \psi |\left( {{\nabla ^2}V(\left| \psi  \right\rangle )} \right)\left| {\partial \psi } \right\rangle. \hfill \\ 
\end{gathered} \]
\end{proof}
This proposition presents the directional counterpart of the Taylor series for an undifferentiable real-valued complex function (This is why the directional gradients were defined. Although this Lyapunov function is not analytic, a Taylor-like serie is presented). Also this proposition shows why this Lyapunov candidate suits the problem of manipulating the SSE. The value of perturbed Lyapunov function is exactly evaluated by two perturbation steps.
\par Based on the presented definitions, in the next theorem, the extended version of the It\"o formula for undifferentiable complex functions of a strong Markov process is presented. This extended version of the It\"o formula is the starting point to use the stochastic Lyapunov theory.
\begin{theorem}
Let $V:{S^{2n - 1}} \mapsto \mathbb{R}$ be a real function with well-defined directional gradients defined in D\ref{DDG}. Also, assume that (\ref{110}) holds and $\left| \psi  \right\rangle$ be an adapted stochastic process driven by the following It\"o drift-diffusion stochastic differential equation:
\begin{equation}
d\left| \psi  \right\rangle  = f(\left| \psi  \right\rangle )dt + g(\left| \psi  \right\rangle )dW.
\label{ITO}
\end{equation}
Then, the following equality holds for the stochastic increment of $V$:
\begin{eqnarray}
{d_{d\left| \psi  \right\rangle }}V(\left| \psi  \right\rangle ) &&= \left( {{\nabla }V(\left| \psi  \right\rangle )f(\left| \psi  \right\rangle ) + \frac{1}{2}{{\left( {g(\left| \psi  \right\rangle )} \right)}^\dag }{\nabla ^2}V(\left| \psi  \right\rangle )\left( {g(\left| \psi  \right\rangle )} \right)} \right)dt \nonumber \\&&+ \left( {{\nabla }V(\left| \psi  \right\rangle )g(\left| \psi  \right\rangle )} \right)dW.
\label{EITO}
\end{eqnarray}

\end{theorem}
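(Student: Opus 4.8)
The plan is to recover the classical derivation of It\^o's formula, the essential simplification being that hypothesis \ref{110} already supplies an \emph{exact} second--order expansion, so no Taylor remainder has to be estimated. First I would fix $T>0$ and a partition $0=t_0<t_1<\dots<t_N=T$ with mesh $\delta=\max_k(t_{k+1}-t_k)$, and abbreviate $f_k=f(|\psi(t_k)\rangle)$, $g_k=g(|\psi(t_k)\rangle)$, $\Delta t_k=t_{k+1}-t_k$, $\Delta W_k=W(t_{k+1})-W(t_k)$, $|\Delta\psi_k\rangle=|\psi(t_{k+1})\rangle-|\psi(t_k)\rangle$. Since the strong solution of \ref{ITO} has a.s.\ continuous paths on the compact interval $[0,T]$ (its existence and uniqueness being guaranteed by the Lipschitz and growth conditions recorded after \ref{fg}), we have $\max_k \| \, |\Delta\psi_k\rangle \, \| \to 0$ a.s.\ as $\delta\to 0$; hence for $\delta$ small enough every increment is ``sufficiently small'' in the sense required by \ref{110}, and applying that identity to each term of the telescoping sum $V(|\psi(T)\rangle)-V(|\psi(0)\rangle)=\sum_k\big(V(|\psi(t_{k+1})\rangle)-V(|\psi(t_k)\rangle)\big)$ gives
\begin{equation*}
V(|\psi(T)\rangle)-V(|\psi(0)\rangle)=\sum_k\big(\nabla V(|\psi(t_k)\rangle)\big)|\Delta\psi_k\rangle+\tfrac12\sum_k\langle\Delta\psi_k|\big(\nabla^2 V(|\psi(t_k)\rangle)\big)|\Delta\psi_k\rangle .
\end{equation*}

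Next I would pass to the limit $\delta\to 0$ in the two sums. Writing $|\Delta\psi_k\rangle=f_k\Delta t_k+g_k\Delta W_k+r_k$, where the remainder satisfies the standard local estimate $E\| r_k \|^2 = O\big((\Delta t_k)^2\big)$ for \ref{ITO}, the first sum splits --- using that $\nabla V(|\psi\rangle)$ acts linearly on its argument and that $\Delta t_k,\Delta W_k$ are real --- into a Riemann sum converging a.s.\ to $\int_0^T\nabla V(|\psi(s)\rangle)f(|\psi(s)\rangle)\,ds$ and a nonanticipating sum converging in $L^2$ to the It\^o integral $\int_0^T\nabla V(|\psi(s)\rangle)g(|\psi(s)\rangle)\,dW_s$, the $r_k$ contribution being negligible by the bound above. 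In the second sum the same substitution yields, to leading order, $\tfrac12\sum_k\langle g_k|\big(\nabla^2 V(|\psi(t_k)\rangle)\big)|g_k\rangle(\Delta W_k)^2$ plus cross terms of order $\Delta t_k\Delta W_k$, a term of order $(\Delta t_k)^2$, and $r_k$--terms, all of which vanish in $L^2$; finally $(\Delta W_k)^2$ may be replaced by $\Delta t_k$ by the usual quadratic--variation computation (using $E(\Delta W_k)^2=\Delta t_k$, $\operatorname{Var}\big((\Delta W_k)^2\big)=2(\Delta t_k)^2$, and the independence of increments), so this sum converges in probability to $\tfrac12\int_0^T\big(g(|\psi(s)\rangle)\big)^{\dag}\big(\nabla^2 V(|\psi(s)\rangle)\big)\big(g(|\psi(s)\rangle)\big)\,ds$. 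Collecting the two limits gives the integral form of \ref{EITO}, of which \ref{EITO} is the differential shorthand; since this holds for every $T$, it holds as an identity of stochastic differentials.

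The step I expect to be the main obstacle is the second sum: one has to establish rigorously that $\sum_k\langle\Delta\psi_k|\big(\nabla^2 V(|\psi(t_k)\rangle)\big)|\Delta\psi_k\rangle$ has the stated limit, i.e.\ that only the Brownian part survives and contributes $\langle g_k|\big(\nabla^2 V(|\psi(t_k)\rangle)\big)|g_k\rangle\Delta t_k$ per step --- this is the genuine It\^o--calculus content, resting on the fourth--moment bound $E(\Delta W_k)^4=3(\Delta t_k)^2$ and on the fact that $(\Delta W_k)^2-\Delta t_k$ is mean zero and independent of $\mathcal F_{t_k}$; I would isolate this convergence as an auxiliary lemma. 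By contrast, what is usually the other delicate ingredient of It\^o's theorem --- controlling a third--order remainder $\sum_k o\big(\, \| \, |\Delta\psi_k\rangle \, \|^2 \big)$ --- does not occur here, precisely because \ref{110} is an exact equality rather than an approximation, so the directional second--order gradient $\nabla^2 V$ reproduces the increment with no error. A last, minor point to watch is that $\nabla V$ is only real--linear (it need not satisfy the Cauchy--Riemann condition, which is the very reason the directional gradients of D\ref{DDG} were introduced), but since both $dt$ and $dW$ are real this causes no difficulty in the substitution step.
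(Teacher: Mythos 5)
Your proposal is correct and follows essentially the same route as the paper: substitute the increment of \ref{ITO} into the exact second--order identity \ref{110} and let the It\^o multiplication rules ($(dW)^2\to dt$, $dt\,dW\to 0$, $(dt)^2\to 0$) eliminate everything except the drift, diffusion, and quadratic--variation terms. The only difference is one of rigor, not of idea --- you carry out the partition/telescoping limit argument explicitly where the paper manipulates the differentials formally --- and your observation that the exactness of \ref{110} removes the usual Taylor--remainder estimate is precisely the point of the paper's construction.
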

\begin{proof}
Substituting (\ref{ITO}) into (\ref{110}) in place of ${\left| {\partial \psi } \right\rangle }$ gives:
\begin{equation}
\begin{gathered}
  {d_{d\left| \psi  \right\rangle }}V(\left| \psi  \right\rangle ) = {\nabla }V(\left| \psi  \right\rangle )f(\left| \psi  \right\rangle )dt + {\nabla }V(\left| \psi  \right\rangle )g(\left| \psi  \right\rangle )dW \\ 
  {\text{     }} + \frac{1}{2}\left( \begin{gathered}
  {\left( {f(\left| \psi  \right\rangle )} \right)^\dag }{\nabla ^2}V(\left| \psi  \right\rangle )\left( {f(\left| \psi  \right\rangle )} \right){(dt)^2} \hfill \\
   + {\left( {g(\left| \psi  \right\rangle )} \right)^\dag }{\nabla ^2}V(\left| \psi  \right\rangle )\left( {g(\left| \psi  \right\rangle )} \right){(dW)^2} \hfill \\
   + {\left( {g(\left| \psi  \right\rangle )} \right)^\dag }{\nabla ^2}V(\left| \psi  \right\rangle )\left( {f(\left| \psi  \right\rangle )} \right)(dtdW) \hfill \\
   + {\left( {f(\left| \psi  \right\rangle )} \right)^\dag }{\nabla ^2}V(\left| \psi  \right\rangle )\left( {g(\left| \psi  \right\rangle )} \right)(dtdW) \hfill \\ 
\end{gathered}  \right). \\ 
\end{gathered}
\end{equation}
Keeping terms up to $(dW)^2$ and integrating in the sense of It\"o, gives:
\begin{equation}
\begin{gathered}
  V(\left| {\psi (t)} \right\rangle ) = \int_0^t {{\nabla }V(\left| \psi  \right\rangle )f(\left| \psi  \right\rangle )dt}  + \int_0^t {{\nabla }V(\left| \psi  \right\rangle )g(\left| \psi  \right\rangle )dW}  \\ 
   + \frac{1}{2}\int_0^t {{{\left( {g(\left| \psi  \right\rangle )} \right)}^\dag }{\nabla ^2}V(\left| \psi  \right\rangle )\left( {g(\left| \psi  \right\rangle )} \right){{(dW)}^2}}  \\ 
\end{gathered}
\end{equation}
Using the It\"o multiplication rule for the third integral gives \cite{chen1995linear}:
\begin{equation}
\begin{gathered}
  V(\left| {\psi (t)} \right\rangle ) = \int_0^t {\left( {{\nabla }V(\left| \psi  \right\rangle )f(\left| \psi  \right\rangle ) + \frac{1}{2}{{\left( {g(\left| \psi  \right\rangle )} \right)}^\dag }{\nabla ^2}V(\left| \psi  \right\rangle )\left( {g(\left| \psi  \right\rangle )} \right)} \right)dt}  \\ 
   + \int_0^t {{\nabla }V(\left| \psi  \right\rangle )g(\left| \psi  \right\rangle )dW}. \\ 
\end{gathered} 
\end{equation}
Therefore, the It\"o form of stochastic differential equation for $V(\left| \psi  \right\rangle )$ is:
\begin{equation}
\begin{gathered}
  {d_{d\left| \psi  \right\rangle }}V(\left| \psi  \right\rangle ) = \left( {{\nabla}V(\left| \psi  \right\rangle )f(\left| \psi  \right\rangle ) + \frac{1}{2}{{\left( {g(\left| \psi  \right\rangle )} \right)}^\dag }{\nabla ^2}V(\left| \psi  \right\rangle )\left( {g(\left| \psi  \right\rangle )} \right)} \right)dt \\ 
   + \left( {{\nabla }V(\left| \psi  \right\rangle )g(\left| \psi  \right\rangle )} \right)dW. \\ 
\end{gathered}
\end{equation}
\end{proof}
In the rest of this paper, the coefficient of $dt$ will be denoted by ${L_{d\left| \psi  \right\rangle }}V(\left| \psi  \right\rangle )$ or ${L}V(\left| \psi  \right\rangle )$.
\begin{remark}
The previous theorem presents the stochastic increment of $V(\left| \psi  \right\rangle )$. This increment is evaluated by keeping terms up to $O(dt)$, which is reasonable since $dt\rightarrow 0$. Also, for the complex functions with directional gradients not satisfying (\ref{110}), one can derive (\ref{EITO}) in the second order approximation, which is valid as $dt\rightarrow 0$.
\end{remark}
\par For the Lyapunov function (\ref{LYAP}), with directional gradients in (\ref{DG}), one may deduce that the stochastic increment of $V(\left| \psi  \right\rangle )$ regarding the SSE defined by (\ref{SSE}) and driven by the control field in (\ref{HU}) has the following form:
\begin{equation}
\begin{gathered}
  {d_{d\left| \psi  \right\rangle }}V(\left| \psi  \right\rangle ) = \left( \begin{gathered}
  \frac{{ - 1}}{\hbar }\operatorname{Im}\left( {\left\langle {\psi }
 \mathrel{\left | {\vphantom {\psi  {{\psi _f}}}}
 \right. \kern-\nulldelimiterspace}
 {{{\psi _f}}} \right\rangle \left\langle {{\psi _f}} \right|H(U)\left| \psi  \right\rangle } \right) \\
 + k\operatorname{Re} \left( {\left\langle {\psi }
 \mathrel{\left | {\vphantom {\psi  {{\psi _f}}}}
 \right. \kern-\nulldelimiterspace}
 {{{\psi _f}}} \right\rangle \left\langle {{\psi _f}} \right|{{\left( {X - \left\langle X \right\rangle } \right)}^2}\left| \psi  \right\rangle } \right) \hfill \\
   - k{\left| {\left\langle {{\psi _f}} \right|\left( {X - \left\langle X \right\rangle } \right)\left| \psi  \right\rangle } \right|^2} \hfill \\ 
\end{gathered}  \right)dt \\ 
   - \sqrt {2k} \operatorname{Re} \left( {\left\langle {\psi }
 \mathrel{\left | {\vphantom {\psi  {{\psi _f}}}}
 \right. \kern-\nulldelimiterspace}
 {{{\psi _f}}} \right\rangle \left\langle {{\psi _f}} \right|\left( {X - \left\langle X \right\rangle } \right)\left| \psi  \right\rangle } \right)dW. \\ 
\end{gathered}
\label{IG}
\end{equation}
Based on the proposed mathematical background, in the following sections, main results on the stability of quantum trajectories are presented.

\section{Main results on the stability of the SSE}\label{p13}
\subsection{Definitions}
The stochastic properties, studied in this article are constructed on the complete probability space $(\Omega,\Xi ,\operatorname{P})$, where $\Omega$ is the sample space, $\Xi$ is the $\sigma$-algebra generated by $\Omega$ and $\operatorname{P}$ is a probability measure defined on $\Xi$. Let us consider a quantum stochastic process, which is the solution of SSE (\ref{SSE}), by $\left| {{\psi }(\omega ,t)} \right\rangle :\left( {\Omega  \times T,\Xi  \otimes {{\rm B}(T) }} \right) \mapsto \left( {{\mathcal{H}},{\rm B}({\mathcal{H}})} \right)$ and a quantum trajectory (a sample path of $\left| {{\psi }(\omega ,t)} \right\rangle$) started from $\left| \psi_0 \right\rangle$ at $t=0$ by $\left| {{\psi ^{\left| {{\psi _0}} \right\rangle }}(t)} \right\rangle :T \mapsto \mathcal{H}$, where $T \doteq \left[ {0,\infty } \right)$ is the time index and $\rm B(.)$ denotes the Borel $\sigma$-algebra generated by the corresponding set.
\par Lemma \ref{L20} provides that support of $V$ ($\operatorname{supp}(V) =\left\{ {\left| \psi  \right\rangle  \notin \left[ {\left| {{\psi _f}} \right\rangle } \right]} \right\}$), is compact in ${\mathbb{C}^n}$, thus ${L}V(\left| \psi  \right\rangle )$ is the infinitesimal generator of $V(\left| \psi  \right\rangle )$, i.e.,
\begin{equation}
LV(\left| {{\psi ^{\left| {{\psi _0}} \right\rangle }}(t)} \right\rangle ) = \mathop {\lim }\limits_{h \downarrow 0} \frac{{\operatorname{E} \left[ {V\left( {\left| {\psi (\omega ,t + h)} \right\rangle } \right)} \right] - V\left( {\left| {{\psi ^{\left| {{\psi _0}} \right\rangle }}(t)} \right\rangle } \right)}}{h}.
\end{equation}
\par Based on the previous sections, we propose two definitions on the stability of quantum trajectories.
\begin{definition}
The quantum stochastic process $\left| {\psi (\omega ,t)} \right\rangle  \equiv \left[ {\left| {{\psi _f}} \right\rangle } \right]$ driven by (\ref{SSE}) is said to be:
\begin{enumerate}[label=(\roman*)]
\item { \textit{stochastically stable} if for any $0<\epsilon$:}
\begin{equation}
\mathop {\lim }\limits_{\left\| {\left| {\partial \psi } \right\rangle } \right\| \to 0}\operatorname{P}\left\{ {\mathop {\sup }\limits_{0 \leqslant t} \left\| {\left| {{\psi ^{\left| {{\psi _f}} \right\rangle  + \left| {\partial \psi } \right\rangle }}(t)} \right\rangle  - \left[ {\left| {{\psi _f}} \right\rangle } \right]} \right\| \geqslant \varepsilon } \right\} = 0
\end{equation}
\item {\textit{ stochastically asymptotically stable} if it is stochastically stable and also:}
\begin{equation}
\mathop {\lim }\limits_{\left\| {\left| {\partial \psi } \right\rangle } \right\| \to 0} P\left\{ {\mathop {\sup }\limits_{0 \leqslant t} \left\| {\left| {{\psi ^{\left| {{\psi _f}} \right\rangle  + \left| {\partial \psi } \right\rangle }}(t)} \right\rangle  - \left[ {\left| {{\psi _f}} \right\rangle } \right]} \right\| = 0} \right\} = 1.
\end{equation}
\end{enumerate}
\end{definition}
The previous definitions presented two notions of stability for the considered quantum trajectories. A stable quantum trajectory remains in a neighbourhood of the equivalence class of the desired final state almost surely as the initial value approaches that equivalence class. The second notion guarantees that the quantum trajectory does not escape the equivalence class almost surely.

\subsection{Stochastic stability of SSE}
\par In order to study the stability of quantum trajectories, the conditions $\hat f(\left[ {\left| {{\psi _f}} \right\rangle } \right]) = \hat g(\left[ {\left| {{\psi _f}} \right\rangle } \right]) = {\mathbf{0}}$ must necessarily hold. These conditions are satisfied if A\ref{A20} and A\ref{A40} are satisfied in addition to the condition that control signals $U$ vanish at $\left[ {\left| {{\psi _f}} \right\rangle } \right]$. Thus A\ref{A20} and A\ref{A40} are the basic necessary assumptions in the rest of this section. Also one may deduce that according to A\ref{A20} and A\ref{A40}, the infinitesimal generator ${L}V(\left| \psi  \right\rangle )$ in (\ref{IG}) takes the following form:
\begin{equation}
LV(\left| \psi  \right\rangle ) =  - \frac{1}{\hbar }\sum\limits_{k = 1}^n {{u_k}(t)} \operatorname{Im} \left( {\left\langle {\psi }
 \mathrel{\left | {\vphantom {\psi  {{\psi _f}}}}
 \right. \kern-\nulldelimiterspace}
 {{{\psi _f}}} \right\rangle \left\langle {{\psi _f}} \right|{H_k}\left| \psi  \right\rangle } \right).
\label{LV1}
\end{equation}
Now the necessity of A\ref{A30} is more obvious. In the absence of A\ref{A30}, the SSE (\ref{SSE}) would be uncontrollable since all the coefficients of control signals in (\ref{LV1}) would vanish everywhere in ${S^{2n - 1}}$. In the rest of this paper, the following control signals will be used:
\begin{equation}
{u_k}(t) = {\alpha _k}\operatorname{Im} \left( {{e^{i\measuredangle \left\langle {\psi }
 \mathrel{\left | {\vphantom {\psi  {{\psi _f}}}}
 \right. \kern-\nulldelimiterspace}
 {{{\psi _f}}} \right\rangle }}\left\langle {{\psi _f}} \right|\left. {{H_k}} \right|\left. \psi  \right\rangle } \right),
\label{control}
\end{equation}
where ${\alpha _k} \in {\mathbb{R}^ + }$. This control signal has also been used for stabilizing the deterministic Schr\"odinger equation \cite{shuang2007quantum}. Using the proposed control signals yields:
\begin{equation}
LV(\left| \psi  \right\rangle ) =  - \frac{1}{\hbar }\sum\limits_{k = 1}^n {{\alpha _k}} \left| {\left\langle {\psi }
 \mathrel{\left | {\vphantom {\psi  {{\psi _f}}}}
 \right. \kern-\nulldelimiterspace}
 {{{\psi _f}}} \right\rangle } \right|{\left( {\operatorname{Im} \left( {{e^{i\measuredangle \left\langle {\psi }
 \mathrel{\left | {\vphantom {\psi  {{\psi _f}}}}
 \right. \kern-\nulldelimiterspace}
 {{{\psi _f}}} \right\rangle }}\left\langle {{\psi _f}} \right|{H_k}\left| \psi  \right\rangle } \right)} \right)^2} \leqslant 0.
\label{LV2}
\end{equation}
Now, the following lemma can be stated:
\begin{lemma}
\label{L4}
Consider the Lyapunov function (\ref{LYAP}) where the quantum dynamic is driven by (\ref{SSE}) with control signals (\ref{control}). Then the process $V(\left| \psi  \right\rangle )$ is a \textit{supermartingale}. Also, $\mathop {\lim }\limits_{t \to \infty } \operatorname{E} [V(\left| \psi  \right\rangle )]$ exists and is equal to $\operatorname{E} [V(\mathop {\lim }\limits_{t \to \infty } \left| {{\psi ^{\left| {{\psi _f}} \right\rangle }}(\omega ,t)} \right\rangle )]$.
\end{lemma}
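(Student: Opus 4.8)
The plan is to combine the extended Itô formula \ref{EITO} with the sign estimate \ref{LV2} to obtain the supermartingale property directly, and then to read off the two convergence statements from Doob's supermartingale convergence theorem together with the boundedness of $V$ on the compact hypersphere $S^{2n-1}$.

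First I would apply the extended Itô formula to $V$ along the closed-loop process: after substituting $H(U)$ from \ref{HU} and the feedback \ref{control} into \ref{SSE}, the dynamics are still of the Itô form \ref{ITO}, the directional gradients of $V$ are \ref{DG}, and \ref{110} holds, so the hypotheses of the extended Itô theorem are met. Integrating \ref{EITO} between times $s\le t$ gives
\begin{equation*}
V(\left| {\psi (t)} \right\rangle ) = V(\left| {\psi (s)} \right\rangle ) + \int_s^t LV(\left| {\psi (u)} \right\rangle )\,du + \int_s^t \left( {\nabla V(\left| \psi  \right\rangle )g(\left| \psi  \right\rangle )} \right)dW_u .
\end{equation*}
Since $0\le V=\frac{1}{2}(1-|\langle\psi_f|\psi\rangle|^2)\le\frac{1}{2}$, the process $V(\left| \psi  \right\rangle )$ is bounded, hence adapted and integrable, and $\nabla V(\left| \psi  \right\rangle )g(\left| \psi  \right\rangle )$ is continuous on the compact set $S^{2n-1}$, hence bounded, so the stochastic integral is a genuine (square-integrable) martingale rather than merely a local one. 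Conditioning on the information up to time $s$ and invoking $LV\le 0$ from \ref{LV2} then gives $\operatorname{E}[V(\left| {\psi (t)} \right\rangle )\mid \Xi_s]\le V(\left| {\psi (s)} \right\rangle )$, which is exactly the supermartingale property.

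Next I would take unconditional expectations, so that $t\mapsto\operatorname{E}[V(\left| \psi  \right\rangle )]$ is non-increasing and bounded below by $0$ and hence convergent. Being a non-negative supermartingale, $V(\left| {\psi (t)} \right\rangle )$ converges almost surely (Doob) to some integrable $V_\infty\ge 0$; since $0\le V(\left| {\psi (t)} \right\rangle )\le\frac{1}{2}$ uniformly, the bounded convergence theorem yields $\lim_{t\to\infty}\operatorname{E}[V(\left| \psi  \right\rangle )]=\operatorname{E}[V_\infty]$. To put this in the stated form I would use continuity of $V$ on $S^{2n-1}$ to identify $V_\infty = V\big(\lim_{t\to\infty}\left| {\psi (t)} \right\rangle\big)$, which gives $\lim_{t\to\infty}\operatorname{E}[V(\left| \psi  \right\rangle )]=\operatorname{E}[V(\lim_{t\to\infty}\left| {\psi^{\left| {\psi_f} \right\rangle}(\omega,t)} \right\rangle)]$.

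The hard part will be this last identity, since a priori only the scalar $V(\left| {\psi (t)} \right\rangle )$ is known to converge almost surely, not the state $\left| {\psi (t)} \right\rangle$ itself; the inner limit must therefore be handled either by passing to almost surely convergent subsequences of $\left| {\psi (t)} \right\rangle$ (available by compactness of $S^{2n-1}$) and noting, via continuity of $V$, that every subsequential limit carries the common value $V_\infty$, or by importing the genuine almost-sure convergence of the state established in the following section. A secondary point of care is that the closed-loop drift still admits a well-defined strong Markov solution and satisfies the extended-Itô hypotheses near the locus $\langle\psi_f|\psi\rangle=0$, where the phase $\measuredangle\langle\psi|\psi_f\rangle$ in \ref{control} is singular; this is handled by noting that only the product $|\langle\psi_f|\psi\rangle|\,u_k$ enters \ref{LV1}--\ref{LV2}, which extends continuously to that locus.
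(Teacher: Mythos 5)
Your proposal is correct and follows essentially the same route as the paper: the extended It\^o/Dynkin formula combined with $LV \leq 0$ from \ref{LV2} yields that $V$ is a nonnegative supermartingale, and Doob's convergence theorem gives the limit statements (the paper localizes via the exit time $\tau_m$ from $Q_m$ together with unitarity rather than your direct bound $0 \leq V \leq \tfrac{1}{2}$, but this is the same device). You are in fact more careful than the paper on the final identity $\lim_{t\to\infty}\operatorname{E}[V] = \operatorname{E}[V(\lim_{t\to\infty}\left|\psi\right\rangle)]$, which the paper asserts by citation without addressing that a priori only the scalar $V(\left|\psi\right\rangle)$, and not the state itself, is known to converge.
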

\begin{proof}
The fact that SSE (\ref{SSE}) is a unitary evolution induces that $\operatorname{P} \left\{ {{\tau _m} < \infty } \right\} = 1$, where $\tau _m$ is the first exit time from ${Q_m} \doteq \left\{ {\left| \psi  \right\rangle \left| {V(\left| \psi  \right\rangle ) < m} \right.} \right\}$, i.e., ${\tau _m} \doteq \inf \left\{ {t\left| {\left| {{\psi ^{\left| {{\psi _0}} \right\rangle }}(t)} \right\rangle  \notin {Q_m}} \right.} \right\}$ with $\left| {{\psi _0}} \right\rangle  \in {Q_m}$ almost surely and $1<m$. Thus, $\tau \wedge t = t$ ($\tau \wedge t$ denotes $ \min (\tau ,t) $) almost surely for all $t$. Also, take ${\Gamma _t}$ the family of $\sigma$-algebras of sets of $\Xi$ generated by the Wiener processes $dW$up to time $t$. Now using Dynkins formula gives \cite{kushner1967stochastic}:
\begin{equation}
\operatorname{E} [V\left( {\left| {\psi \left( {\omega ,\left( {t + h} \right) \wedge {\tau _m}} \right)} \right\rangle } \right)\left| {{\Gamma _t}} \right.] = V\left( {\left| {\psi (\omega ,t)} \right\rangle } \right) + \int_t^{(t + h)\wedge {\tau _m}} {LV\left( {\left| {\psi (\omega ,u)} \right\rangle } \right)} du.
\end{equation}
Thus,
\begin{equation}
\operatorname{E} [V\left( {\left| {\psi \left( {\omega ,t + h} \right)} \right\rangle } \right)\left| {{\Gamma _t}} \right.] \leqslant V\left( {\left| {\psi (\omega ,t)} \right\rangle } \right)
\end{equation}
which results the supermartingale property. The fact that $V(\left| \psi  \right\rangle )$ is a non-negative supermartingale gives that $\mathop {\lim }\limits_{t \to \infty } \operatorname{E} [V(\left| \psi  \right\rangle )]$ exists and is equal to \\$\operatorname{E} [V(\mathop {\lim }\limits_{t \to \infty } \left| {{\psi ^{\left| {{\psi _f}} \right\rangle }}(\omega ,t)} \right\rangle )]$ \cite{doob1953stochastic}.
\end{proof}
Now take $0 < R < 2$ and define ${N_R} \doteq \left\{ {\left| \psi  \right\rangle \left| {\left\| {\left| \psi  \right\rangle  - \left| {{\psi _f}} \right\rangle } \right\|} \right. \leqslant R} \right\}$. Assume that $\left| {{\psi _0}} \right\rangle  = \left| {{\psi _f}} \right\rangle  + \left| {\partial \psi } \right\rangle  \in {N_R}$ almost surely. If ${\tau _{{N_R}}} \doteq \inf \left\{ {t\left| {\left| {{\psi ^{\left| {{\psi _0}} \right\rangle }}\left( t \right)} \right\rangle  \notin } \right.{N_R}} \right\}$ be the first exit time from $N_R$, by Lemma \ref{L4} one may deduce:
\begin{equation}
\operatorname{E} [V\left( {\left| {\psi \left( {\omega ,{\tau _{{N_R}}} \wedge t} \right)} \right\rangle } \right)\left| {{\Gamma _t}} \right.] \leqslant V\left( {\left| {{\psi _0}} \right\rangle } \right)
\label{270}
\end{equation}
which expresses that the stopped process $V\left( {\left| {\psi \left( {\omega ,{\tau _{{N_R}}} \wedge t} \right)} \right\rangle } \right)$ is also a supermartingale.
Based on these derivations, the following theorem is presented, which plays an important role in stabilization procedure and shows the stochastic stability of ${\left| {{\psi _f}} \right\rangle }$ .
\begin{theorem}
\label{TH20}
Consider the SSE (\ref{SSE}) with assumptions discussed in section \ref{COQS}. Assume that A\ref{A10} to A\ref{A40} hold. With control signals (\ref{control}), the quantum stochastic process $\left| {\psi (\omega ,t)} \right\rangle  \equiv \left[ {\left| {{\psi _f}} \right\rangle } \right]$ is stochastically stable.
\end{theorem}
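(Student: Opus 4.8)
The plan is to deploy the classical stochastic Lyapunov (Khasminskii--Kushner) argument: the supermartingale maximal inequality is the engine, and the three preceding lemmas supply everything it consumes. Lemma~\ref{L20} pins the zero set of $V$ to the target class $[|\psi_f\rangle]$; Lemma~\ref{L30} provides, for each radius $R\in(0,2)$, a strictly positive number $\nu(R)$ with $V(|\psi\rangle)\ge\nu(R)$ whenever $|\psi\rangle$ sits outside the $R$-tube of $[|\psi_f\rangle]$; and Lemma~\ref{L4} (which already bakes in A\ref{A20}, A\ref{A40} and the control choice \ref{control} via the bound \ref{LV2}) states that $t\mapsto V(|\psi(\omega,t)\rangle)$ is a nonnegative supermartingale along the controlled SSE.

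First I would translate the event in the definition of stochastic stability into a sublevel event for $V$. Fix $\epsilon>0$, assume without loss of generality $\epsilon<2$, take the initial condition $|\psi_0\rangle=|\psi_f\rangle+|\partial\psi\rangle$, and set $A_\epsilon\doteq\{\sup_{0\le t}\||\psi^{|\psi_0\rangle}(t)\rangle-[|\psi_f\rangle]\|\ge\epsilon\}$. On $A_\epsilon$ the supremum over $t$ of the distance to $[|\psi_f\rangle]$ is at least $\epsilon>\epsilon/2$, so there is a (random) time $t_0$ with $\||\psi^{|\psi_0\rangle}(t_0)\rangle-[|\psi_f\rangle]\|>\epsilon/2$; Lemma~\ref{L30} applied with $R=\epsilon/2$ then gives $V(|\psi^{|\psi_0\rangle}(t_0)\rangle)\ge\nu(\epsilon/2)>0$, hence $\sup_{0\le t}V(|\psi^{|\psi_0\rangle}(t)\rangle)\ge\nu(\epsilon/2)$. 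Thus $A_\epsilon\subseteq\{\sup_{0\le t}V(|\psi^{|\psi_0\rangle}(t)\rangle)\ge\nu(\epsilon/2)\}$. (Working at radius $\epsilon/2$ rather than $\epsilon$ is a harmless device that sidesteps the borderline case in which the supremum of the distance is only approached as $t\to\infty$.)

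Next I would apply the maximal inequality to the nonnegative supermartingale $t\mapsto V(|\psi^{|\psi_0\rangle}(t)\rangle)$ supplied by Lemma~\ref{L4}, whose starting value is the deterministic number $V(|\psi_0\rangle)=V(|\psi_f\rangle+|\partial\psi\rangle)$: for every $\lambda>0$,
\begin{equation*}
\lambda\,\operatorname{P}\left\{\sup_{0\le t}V(|\psi^{|\psi_0\rangle}(t)\rangle)\ge\lambda\right\}\;\le\;\operatorname{E}[V(|\psi_0\rangle)]\;=\;V(|\psi_f\rangle+|\partial\psi\rangle).
\end{equation*}
Choosing $\lambda=\nu(\epsilon/2)$ and combining with the inclusion above yields $\operatorname{P}(A_\epsilon)\le V(|\psi_f\rangle+|\partial\psi\rangle)/\nu(\epsilon/2)$. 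Since $V$ is manifestly continuous on $\mathbb{C}^n$ and $V(|\psi_f\rangle)=0$ by Lemma~\ref{L20}, the right-hand side tends to $0$ as $\||\partial\psi\rangle\|\to0$, which is precisely stochastic stability of $[|\psi_f\rangle]$.

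I do not anticipate a deep obstacle: once the extended Ito formula, the control law forcing $LV\le0$, and Lemmas~\ref{L20}--\ref{L4} are in place, this theorem is the textbook stochastic-Lyapunov conclusion. The only point needing genuine care is the event translation in the second step --- matching the distance-to-the-class formulation of $A_\epsilon$ with a sublevel set of $V$ --- which is exactly where Lemma~\ref{L30} (together with the $\epsilon/2$ adjustment and the a.s.\ continuity of the quantum trajectory, so that the continuous-time maximal inequality applies) does the real work. A minor cosmetic caveat is that $|\psi_f\rangle+|\partial\psi\rangle$ need not be exactly normalized; this is immaterial because only continuity of $V$ at $|\psi_f\rangle$ is used, or one may restrict $|\partial\psi\rangle$ to the tangent space of $S^{2n-1}$ and renormalize.
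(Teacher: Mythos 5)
Your proposal is correct and follows essentially the same route as the paper: Lemma~\ref{L4} gives the nonnegative supermartingale, the maximal inequality yields $\operatorname{P}\{\sup_t V\ge\lambda\}\le V(\left|\psi_0\right\rangle)/\lambda$, Lemma~\ref{L30} converts the distance event into a sublevel event for $V$, and continuity of $V$ at $\left[\left|\psi_f\right\rangle\right]$ finishes the limit. The only (cosmetic) difference is that the paper routes the argument through the stopped process at the exit time of $N_R$, whereas you apply the maximal inequality directly with the $\epsilon/2$ device; both implement the same Khasminskii--Kushner estimate.
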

\begin{proof}
Assume that $\left| {{\psi _0}} \right\rangle  = \left| {{\psi _f}} \right\rangle  + \left| {\partial \psi } \right\rangle  \in {N_R}$ for some $0 < R < 2$ . By Lemma \ref{L4} and (\ref{270}), one writes:
\begin{equation}
\label{280}
\operatorname{E} [\mathop {\sup}\limits_{0 < t} V\left( {\left| {{\psi ^{\left| {{\psi _0}} \right\rangle }}\left( {{\tau _{{N_R}}} \wedge t} \right)} \right\rangle } \right)] \leqslant V\left( {\left| {{\psi _0}} \right\rangle } \right).
\end{equation}
Defining $y(w) \doteq \mathop {\sup }\limits_{0 < t} \left\| {\left| {{\psi ^{\left| {{\psi _0}} \right\rangle  }}\left( {{\tau _{{N_R}}} \wedge t} \right)} \right\rangle  - \left[ {\left| {{\psi _f}} \right\rangle } \right]} \right\|$ and $\Omega ' \doteq \left\{ {\omega \left| {R \leqslant y(\omega )} \right.} \right\}$, (\ref{280}) can be rewritten as:
\begin{equation}
\begin{gathered}
  V\left( {\left| {{\psi _0}} \right\rangle } \right) \geqslant \int\limits_\Omega  {\mathop {\sup }\limits_{0 < t} V\left( {\left| {{\psi ^{\left| {{\psi _0}} \right\rangle }}\left( {{\tau _{{N_R}}} \wedge t} \right)} \right\rangle } \right)d\operatorname{P} } (\omega ) \hfill \\
   \hspace{1.3cm}\geqslant \int\limits_{\Omega '} {\mathop {\sup }\limits_{0 < t} V\left( {\left| {{\psi ^{\left| {{\psi _f}} \right\rangle  + \left| {\partial \psi } \right\rangle }}\left( {{\tau _{{N_R}}} \wedge t} \right)} \right\rangle } \right)d\operatorname{P} } (\omega ) \hfill \\
   \hspace{1.3cm}\geqslant \left( \mathop {\inf }\limits_{\Omega '}  {\mathop {\sup }\limits_{0 < t} V\left( {\left| {{\psi ^{\left| {{\psi _f}} \right\rangle  + \left| {\partial \psi } \right\rangle }}\left( {{\tau _{{N_R}}} \wedge t} \right)} \right\rangle } \right)} \right)\operatorname{P} \left\{ {y > R} \right\}. \hfill \\ 
\end{gathered}
\end{equation}
By Lemma \ref{L30} one may deduce:
\begin{equation}
\operatorname{P} \left\{ {\mathop {\sup }\limits_t \left\| {\left| {{\psi ^{\left| {{\psi _f}} \right\rangle  + \left| {\partial \psi } \right\rangle }}\left( t \right)} \right\rangle  - \left[ {\left| {{\psi _f}} \right\rangle } \right]} \right\| > R} \right\} \leqslant \frac{{V\left( {\left| {{\psi _0}} \right\rangle } \right)}}{{\nu (R)}}.
\end{equation}
Now, Lemma \ref{L20} and the continuity of $V(\left| \psi  \right\rangle )$ give:
\begin{equation}
\mathop {\lim }\limits_{\left\| {\left| {\partial \psi } \right\rangle } \right\| \to 0} \operatorname{P} \left\{ {\mathop {\sup }\limits_t \left\| {\left| {{\psi ^{\left| {{\psi _f}} \right\rangle  + \left| {\partial \psi } \right\rangle }}\left( t \right)} \right\rangle  - \left[ {\left| {{\psi _f}} \right\rangle } \right]} \right\| > R} \right\} \leqslant \mathop {\lim }\limits_{\left\| {\left| {\partial \psi } \right\rangle } \right\| \to 0} \frac{{V\left( {\left| {{\psi _0}} \right\rangle } \right)}}{{\nu (R)}} = 0.
\label{310}
\end{equation}
\end{proof}
Theorem \ref{TH20} reveals that the desired final state ${\left| {{\psi _f}} \right\rangle }$ is stochastically stable and the trajectories remain in any prescribed neighbourhood of ${\left[ {\left| {{\psi _f}} \right\rangle } \right]}$ with probability $1$. Also, $\left| {{\psi _0}} \right\rangle  \in \left[ {\left| {{\psi _f}} \right\rangle } \right]$ is a special case of this result. In the rest of this section, we will show that under some conditions on the control Hamiltonians in (\ref{HU}), the stochastic asymptotic stability can be achieved.

\subsection{Stochastic Asymptotic stability of SSE}
Theorem \ref{TH20} represented a stability condition on the quantum trajectories based on (\ref{LV2}). In this sense, the system would evolve until reaching its invariant set, which is itself a subset of $\left\{ {\left[ {\left| \psi  \right\rangle } \right]\left| {LV(\left| \psi  \right\rangle ) = 0} \right.} \right\}$. In order to characterize this set, let us denote the set of eigenvalues of control Hamiltonians by $\sigma ({H}) \doteq \bigcup\limits_k {\sigma ({H_k})}$. Also, based on the Cartan decomposition of $su(n)$, one can always find a basis, in which $H_0$ is diagonal. Denote this basis by $\left\{ {\left| 1 \right\rangle ,\left| 2 \right\rangle ,...,\left| n \right\rangle } \right\}$, where the bases are mutually orthogonal. Without loss of generality, assume that the eigenspace corresponding to ${\left| 1 \right\rangle }$ is degenerate and $\left| {{\psi _f}} \right\rangle  = \left| 1 \right\rangle$. Now using (\ref{LV2}), the following theorem can be stated:
\begin{theorem}
\label{TH30}
Consider the state dynamics (\ref{SSE}) and the Lyapunov function (\ref{LYAP}), also assume that A\ref{A10} to A\ref{A50} hold. The set of quantum states, in which $LV(\left| \psi  \right\rangle ) = 0$, can be decomposed into the following two subsets (i.e. $\left\{ {\left| \psi  \right\rangle \left| {LV(\left| \psi  \right\rangle ) = 0} \right.} \right\} = A \cup B $):
\begin{itemize}
\item {$A = {\left[ {\left| {{\psi _f}} \right\rangle } \right]^ \bot }$} 
\item {$B = \left\{ {\left| \psi  \right\rangle \left| {\forall k, \exists {\lambda _k} \in \mathbb{R}: \left\langle {{{\psi _f}}}
 \mathrel{\left | {\vphantom {{{\psi _f}} {{H_k} - {\lambda _k}I\left| \psi  \right.}}}
 \right. \kern-\nulldelimiterspace}
 {{{H_k} - {\lambda _k}I\left| \psi  \right.}} \right\rangle  = 0} \right.} \right\}$}
\\ Also, the following statements hold:
\begin{enumerate}[label=(B.\roman*)]
\item {In the case that the control Hamiltonians $\left\{H_k\right\}$ have no common eigenkets, $B$ includes at most one equivalence class of states for each choise of $\left\{ {{\lambda _k} \in \mathbb{R},k = 1,...,n - 1} \right\}$\label{B1}}.
\item{\label{B2}In the case that the control Hamiltonians $\left\{H_k\right\}$ have $s$ independent common eigenkets (each of them is an eigenket for at least $2$ of the Hamiltonians) then $B$ includes at most $1+s$ different equivalence classes of quantum states.}
\item{\label{B3}$\left[ {\left| {{\psi _f}} \right\rangle } \right] \in B$ for some choice of $\left\{ {{\lambda _k} \in \mathbb{R} - \sigma ({H})} \right\}$.}
\end{enumerate}
\end{itemize}
\end{theorem}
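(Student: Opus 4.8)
I would separate the proof into (i) the purely algebraic reduction of the equation $LV(|\psi\rangle)=0$, and (ii) the linear algebra behind the three sub-claims. For (i): since every $\alpha_k>0$ and $|\langle\psi|\psi_f\rangle|\ge 0$, expression \ref{LV2} vanishes iff each summand vanishes, i.e. for every active $k$ either $\langle\psi|\psi_f\rangle=0$ or $\operatorname{Im}\!\bigl(e^{i\measuredangle\langle\psi|\psi_f\rangle}\langle\psi_f|H_k|\psi\rangle\bigr)=0$. The first alternative is precisely $|\psi\rangle\in[|\psi_f\rangle]^{\perp}=A$, and conversely any $|\psi\rangle\in A$ kills every summand. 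For the second, assuming $\langle\psi|\psi_f\rangle\neq 0$ I would write $\langle\psi_f|\psi\rangle=\rho e^{-i\phi}$ with $\rho>0$, $\phi=\measuredangle\langle\psi|\psi_f\rangle$; then the vanishing of the imaginary part says $e^{i\phi}\langle\psi_f|H_k|\psi\rangle=:t_k\in\mathbb{R}$, and setting $\lambda_k:=t_k/\rho\in\mathbb{R}$ gives $\langle\psi_f|(H_k-\lambda_k I)|\psi\rangle=0$, hence $|\psi\rangle\in B$. Reading this computation backwards gives $B\subseteq\{LV=0\}$ (a $|\psi\rangle\in B$ with $\langle\psi_f|\psi\rangle=0$ already lies in $A$). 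This settles $\{LV(|\psi\rangle)=0\}=A\cup B$.

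Next I would put $B$ in geometric form. By A\ref{A50} take $H_1,\dots,H_{n-1}$ as the active control Hamiltonians, with $\{H_0,\dots,H_{n-1}\}$ linearly independent and none of $H_1,\dots,H_{n-1}$ having $|\psi_f\rangle$ as an eigenket. Using $H_k=H_k^{\dagger}$, the condition $\langle\psi_f|(H_k-\lambda_k I)|\psi\rangle=0$ reads $|\psi\rangle\perp(H_k-\lambda_k I)|\psi_f\rangle$, so with $W_{\vec\lambda}:=\operatorname{span}\{(H_k-\lambda_k I)|\psi_f\rangle\}_{k=1}^{n-1}$ one obtains $B\cap S^{2n-1}=\bigcup_{\vec\lambda\in\mathbb{R}^{n-1}}\bigl(W_{\vec\lambda}^{\perp}\cap S^{2n-1}\bigr)$. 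The number of equivalence classes inside $W_{\vec\lambda}^{\perp}\cap S^{2n-1}$ is $0$ if $\dim W_{\vec\lambda}=n$ (impossible with $n-1$ generators), exactly one if $\dim W_{\vec\lambda}=n-1$, and a continuum if $\dim W_{\vec\lambda}\le n-2$. The decisive elementary fact is that the generators of $W_{\vec\lambda}$ are linearly dependent iff some $\vec c\neq 0$ makes $\bigl(\sum_k c_k H_k\bigr)|\psi_f\rangle$ proportional to $|\psi_f\rangle$, i.e. $|\psi_f\rangle$ is an eigenvector of the operator $\sum_k c_k H_k$; here the independence of $\{H_0,\dots,H_{n-1}\}$ in A\ref{A50} is used to exclude $\sum_k c_k H_k=0$.

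From this I would read off the sub-claims. For \ref{B1}, if $\{H_k\}$ has no common eigenket then $|\psi_f\rangle$ cannot be a simultaneous eigenvector of two non-proportional combinations $\sum c_kH_k$, $\sum c'_kH_k$ (otherwise it is a common eigenvector of the whole $2$-dimensional pencil they span), so for every $\vec\lambda$ the $n-1$ generators of $W_{\vec\lambda}$ stay independent, $\dim W_{\vec\lambda}=n-1$, and $W_{\vec\lambda}^{\perp}\cap S^{2n-1}$ is a single equivalence class; Lemma \ref{lemma10} can be used to turn ``two non-equivalent states of $B_{\vec\lambda}$'' into an explicit eigenvector relation. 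For \ref{B2}, I would track which directions $\vec c$ degenerate $W_{\vec\lambda}$: each of the $s$ independent common eigenkets of $\{H_k\}$ adds at most one extra equivalence class to $B$ beyond the regular one, the requirement that each be shared by at least two of the $H_k$ being exactly what keeps these contributions from proliferating, so $B$ carries at most $1+s$ classes. For \ref{B3}, taking $\lambda_k:=\langle\psi_f|H_k|\psi_f\rangle\in\mathbb{R}$ gives $\langle\psi_f|(H_k-\lambda_k I)|\psi_f\rangle=0$, so $[|\psi_f\rangle]\in B$; that these multipliers can be arranged in $\mathbb{R}\setminus\sigma(H)$ uses A\ref{A30}/A\ref{A50} (whence each such average is interior to $\sigma(H_k)$) together with a genericity argument.

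The first two steps are routine; the genuine obstacle is the third one — converting the hypothesis on the common eigenkets of $\{H_k\}$ into the precise counts, and in particular controlling the exceptional $\vec\lambda$ for which the $n-1$ generators of $W_{\vec\lambda}$ become dependent even though the $H_k$ share no eigenvector (a real possibility, since $\sum_k c_k H_k$ sweeps out a whole subspace of operators). Showing that such degeneracies can only contribute the finitely many extra classes counted by $s$, and that the Lagrange-type multipliers attached to each surviving class are mutually consistent, is where the argument has to be careful.
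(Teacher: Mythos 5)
Your reduction of $\{LV(|\psi\rangle)=0\}$ to $A\cup B$ is correct and coincides with the paper's own step (the equivalence $\operatorname{Im}(e^{i\measuredangle\langle\psi|\psi_f\rangle}\langle\psi_f|H_k|\psi\rangle)=0\Leftrightarrow\langle\psi_f|(H_k-\lambda_k I)|\psi\rangle=0$ for some real $\lambda_k$), and your dual reformulation $B_{\vec\lambda}=W_{\vec\lambda}^{\perp}$ with $W_{\vec\lambda}=\operatorname{span}\{(H_k-\lambda_k I)|\psi_f\rangle\}_{k=1}^{n-1}$ is genuinely cleaner than the paper's primal construction of the subspaces $S_k(\lambda_k)$ and their intersection: it makes transparent that everything reduces to the linear independence of the $n-1$ vectors $(H_k-\lambda_k I)|\psi_f\rangle$. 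The choice $\lambda_k=\langle\psi_f|H_k|\psi_f\rangle$ for claim 3 is also exactly the paper's.

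The gap sits in claims 1 and 2, precisely at the step you yourself flag as ``where the argument has to be careful,'' and nothing you wrote closes it. Your degeneracy criterion is right: $\dim W_{\vec\lambda}<n-1$ iff some $\vec{c}\neq 0$ makes $|\psi_f\rangle$ an eigenvector of $\sum_k c_k H_k$ with eigenvalue $\sum_k c_k\lambda_k$. But your argument for claim 1 only rules out $|\psi_f\rangle$ being a simultaneous eigenvector of \emph{two} non-proportional combinations, whereas a \emph{single} nonzero combination already destroys the count: if $(\sum_k c_k H_k)|\psi_f\rangle=\mu|\psi_f\rangle$, then every $\vec\lambda$ on the affine hyperplane $\sum_k c_k\lambda_k=\mu$ gives $\dim W_{\vec\lambda}\le n-2$, hence a continuum of equivalence classes in $W_{\vec\lambda}^{\perp}$, contradicting the ``at most one class per $\vec\lambda$'' you are trying to prove. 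Neither ``no common eigenket of the $H_k$'' nor A\ref{A30}/A\ref{A50} forbids this: take any traceless self-adjoint $M$ with $M|\psi_f\rangle=\mu|\psi_f\rangle$ and split $M=H_1+H_2$ so that neither summand has $|\psi_f\rangle$ as an eigenket and the summands share no eigenket. So the implication you need is not available from the stated hypotheses; claim 2's bound $1+s$ is likewise asserted rather than derived (and note that it counts classes over \emph{all} $\vec\lambda$, a stronger statement than the per-$\vec\lambda$ count). For claim 3, $\lambda_k$ is forced to equal $\langle\psi_f|H_k|\psi_f\rangle$, so there is no genericity to exploit; one must check directly that this value avoids $\sigma(H)$. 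For comparison, the paper's own proof stumbles at the same place: it shows the hyperplanes $S_k(\lambda_k)$ are pairwise non-coincident and then asserts that $n-1$ pairwise distinct hyperplanes of $\mathbb{C}^n$ meet in at most a line, which fails for $n\ge 4$ and in any case does not address the degeneration above. Your instinct about where the obstacle lies is correct, but the proposal leaves the theorem's central counting claims unproved.
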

Before proceeding with the proof, let us present a lemma which will be used in the proof of this theorem.

\begin{lemma}
\label{L5}
Consider the control Hamiltonians in (\ref{HU}). Assume that A\ref{A10} to A\ref{A50} hold. Then, for each choice of $\left\{ {{\lambda _k} \in \mathbb{R},k = 1,...,n - 1} \right\}$, the set \\$\left\{{ H_1}-\lambda_1I, { H_2}-\lambda_2I,...,{ H_{n-1}}-\lambda_{n-1}I \right\}$ is linearly independent.
\end{lemma}
\begin{proof}
Without loss of generality assume that $\lambda_1 \ne 0$. By contradiction assume that $$\left\{{ H_1}-\lambda_1I, { H_2}-\lambda_2I,...,{ H_{n-1}}-\lambda_{n-1}I \right\}$$ is linearly dependent. Thus for some nonzero set $\left\{ c_2,...,{ c_{n-1}}\right\}$ one may find an scalar $\alpha$ such that:
\begin{equation*}
\alpha ({H_1} - {\lambda _1}I) = \sum\limits_{k = 2}^{n - 1} {{c_k}({H_k} - {\lambda _k}I)}  \Rightarrow \alpha {H_1} = \sum\limits_{k = 2}^{n - 1} {\left( {{c_k}{H_k}} \right)}  - \sum\limits_{k = 2}^{n - 1} {\left( {{c_k}{\lambda _k}} \right)} I + \alpha {\lambda _1}I.
\end{equation*}
By the fact that $i{H_k} \in su(n)$ (and thus they are traceless), one has the unique choice of $\alpha  = \frac{{\sum\limits_{k = 2}^{n - 1} {\left( {{c_k}{\lambda _k}} \right)} }}{{{\lambda _1}}}$.  So one deduces that:
\begin{equation}
\alpha {H_1} = \sum\limits_{k = 2}^{n - 1} {\left( {{c_k}{H_k}} \right)}
\end{equation}
which contradicts the assumption A\ref{A50} in both cases $\alpha=0$ and $\alpha \ne 0$.
\end{proof}
\begin{proof}[proof of Theorem \ref{TH30}]
Due to (\ref{LV2}), the space perpendicular to ${\left[ {\left| {{\psi _f}} \right\rangle } \right]^ \bot }$ belongs to $\left\{ {\left| \psi  \right\rangle \left| {LV(\left| \psi  \right\rangle ) = 0} \right.} \right\} $, which shows $A$. In this proof, first we neglect the unitarity of the quantum state, and after finding the un-normalized solution subspace, it will be intersected with the unit sphere. Assume that $\mathbb{R}$ is partitioned as $\mathbb{R} = \left( {\mathbb{R} - \sigma ({H_k})} \right) \cup \sigma ({H_k})$. If ${\left| {{\psi _0}} \right\rangle } \notin {\left[ {\left| {{\psi _f}} \right\rangle } \right]^ \bot }$, (\ref{LV2}) implies that we should search for the common solutions of ${\operatorname{Im} \left( {{e^{i\measuredangle \left\langle {\psi }
 \mathrel{\left | {\vphantom {\psi  {{\psi _f}}}}
 \right. \kern-\nulldelimiterspace}
 {{{\psi _f}}} \right\rangle }}\left\langle {{\psi _f}} \right|{H_k}\left| \psi  \right\rangle } \right)}=0$ for all $k$, but:
\begin{equation}
\operatorname{Im} \left( {{e^{i\measuredangle \left\langle {{{\psi _f}}}
 \mathrel{\left | {\vphantom {{{\psi _f}} \psi }}
 \right. \kern-\nulldelimiterspace}
 {\psi } \right\rangle }}\langle {\psi _f}|{H_k}\left| \psi  \right\rangle } \right) = 0 \Leftrightarrow \langle {\psi _f}|{H_k}\left| \psi  \right\rangle  = {\lambda _k}\left\langle {{{\psi _f}}}
 \mathrel{\left | {\vphantom {{{\psi _f}} \psi }}
 \right. \kern-\nulldelimiterspace}
 {\psi } \right\rangle \Leftrightarrow \langle {\psi _f}|{H_k} - {\lambda _k}I\left| \psi  \right\rangle  = 0
\label{320}
\end{equation}
for real $\lambda _k$'s.
\\ Let us first prove \ref{B1}. Assume that $ {{\lambda _k} \in \mathbb{R} - \sigma ({H_k})}$ for all $k$. Thus, ${H_k} - {\lambda _k}I$ is non-singular. Hence, we may characterize the subspace, which the solutions of (\ref{320}) belong to for each $k$ as:
\begin{equation}
{S_k}(\lambda_k) \doteq \operatorname{span}\left\{ {{{\left( {{H_k} - {\lambda _k}I} \right)}^{ - 1}}\left| 2 \right\rangle ,...,{{\left( {{H_k} - {\lambda _k}I} \right)}^{ - 1}}\left| n \right\rangle } \right\},
\end{equation}
which is an $(n-1)-$dimensional subspace regarding to linear independence of $\left| i \right\rangle$'s. Thus, the solution of (\ref{320}) must necessarily belong to the intersection of ${S_k}(\lambda_k)$'s for each choice of $\left\{ {{\lambda _k} \in \mathbb{R} - \sigma ({H_k})} \right\}$:
\begin{equation}
\left| \psi  \right\rangle  \in  {\bigcap\limits_k {{S_k}({\lambda _k})} } .
\end{equation}
By A\ref{A50} and Lemma \ref{L5}, we may deduce that none of the subspaces ${S_k}(\lambda_k)$ can exactly coincide. For further demonstrations, one may show that \\$\sum\limits_{j = 2}^n {\left\langle j \right|({H_t} - {\lambda _t}I){{({H_u} - {\lambda _u}I)}^{ - 1}}\left| 1 \right\rangle } \left| j \right\rangle$ belongs to ${S_t}(\lambda_t)$ but not ${S_u}(\lambda_u)$ for each pair of distinct $u$ and $t$ $\in \left \{1,2,...,n-1 \right\}$. Also, for each distinct $s$, $u$, and $t$, $\sum\limits_{j = 2}^n {\left\langle j \right|({H_t} - {\lambda _t}I){{({H_u} - {\lambda _u}I)}^{ - 1}}\left| 1 \right\rangle } \left| j \right\rangle$ and $\sum\limits_{j = 2}^n {\left\langle j \right|({H_s} - {\lambda _s}I){{({H_u} - {\lambda _u}I)}^{ - 1}}\left| 1 \right\rangle } \left| j \right\rangle$ cannot be collinear (based on Lemma \ref{L5}). The dimension of intersection of $n-1$ non-coincident $(n-1)-$dimensional subspaces is not more than $1$. Now, intersecting the $1$-dimensional solution subspace with the unit sphere implies that $B$ includes at most one equivalence class of quantum states for each choice of $\left\{ {{\lambda _k} \in \mathbb{R} - \sigma ({H_k})} \right\}$. \\Based on this proof, choosing ${\lambda _k} = \langle {\psi _f}|{H_k}\left| {{\psi _f}} \right\rangle$, results in \ref{B3}.
\\ Now, assume that $ {{\lambda _k} \in \sigma ({H_k})}$ for some $k$'s but not all of them. In this case, the solution spaces for (\ref{320}), ($S_k(\lambda _k)$) are defined in a more general manner in order to include singular $({H_k} - {\lambda _k}I)$'s. First, define the non-homogeneous part of $S_k(\lambda _k)$ as:
\begin{eqnarray*}
nh{S_k}({\lambda _k}) \doteq &&\left\{ {\left| \psi  \right\rangle \left| {({H_k} - {\lambda _k}I)} \right.\left| \psi  \right\rangle  = \left| 2 \right\rangle } \right\} \\&&\cup \left\{ {\left| \psi  \right\rangle \left| {({H_k} - {\lambda _k}I)} \right.\left| \psi  \right\rangle  = \left| 3 \right\rangle } \right\} \\&&\cup ... \cup \left\{ {\left| \psi  \right\rangle \left| {({H_k} - {\lambda _k}I)} \right.\left| \psi  \right\rangle  = \left| n \right\rangle } \right\}
\end{eqnarray*}
which includes at most $(n-1)-$deg$(\lambda _k,H_k)$ independent vectors, where (deg$(\lambda _k,H_k)$ is the degeneracy of $\lambda _k$ for $H_k$. Also, define the homogeneous part $h{S_k}({\lambda _k})$ to be the kernel of $({H_k} - {\lambda _k}I)$. Now, the solution space can be defined as:
\begin{equation}
{S_k}({\lambda _k}) \doteq {\text{span}}\left\{ {h{S_k}({\lambda _k}),nh{S_k}({\lambda _k})} \right\},
\end{equation}
which is at most $(n-1)-$dimensional. Consider $s$, $t$, and $u$ such that $({H_s} - {\lambda _s}I)$ is non-singular while $({H_t} - {\lambda _t}I)$ and $({H_u} - {\lambda _u}I)$ are singular, and the vector $\left| \psi_{ts}  \right\rangle \doteq \sum\limits_{j = 2}^n {\left\langle j \right|({H_t} - {\lambda _t}I){{({H_s} - {\lambda _s}I)}^{ - 1}}\left| 1 \right\rangle } \left| j \right\rangle$ (which may be the zero vector) belongs to $S_t(\lambda_t)$ but not $S_s(\lambda_s)$. Assume the same condition for $\left| \psi_{us}  \right\rangle$. If $\left| \psi_{ts}  \right\rangle$ is not collinear with a number of $\left| \psi_{us}  \right\rangle$; $S_t(\lambda_t)$, $S_u(\lambda_u)$, and $S_s(\lambda_s)$ would be non-coincident. We prove that $\left| \psi_{ts}  \right\rangle$ and $\left| \psi_{us}  \right\rangle$ are not collinear, by contradiction. If $\left| \psi_{ts}  \right\rangle$ was collinear to a number of $\left| \psi_{us}  \right\rangle$, then for every $\alpha$ (due to subspace properties for the null-space) we have:
\begin{equation}
\left| {{\psi _{ts}}} \right\rangle  = \alpha \left| {{\psi _{us}}} \right\rangle  \Leftrightarrow \sum\limits_{j = 2}^n {\left\langle j \right|\left( {({H_t} - {\lambda _t}I) - \alpha ({H_u} - {\lambda _u}I)} \right){{({H_s} - {\lambda _s}I)}^{ - 1}}\left| 1 \right\rangle }  = 0.
\end{equation}
It would be necessary that ${{{({H_s} - {\lambda _s}I)}^{ - 1}}\left| 1 \right\rangle }$ be simultaneously an eigenket of ${{{({H_t} - {\lambda _t}I)}} }$ and ${{{({H_u} - {\lambda _u}I)}} }$. Thus, by the assumption in \ref{B2}, if there were no common eigenket for control Hamiltonians, the intersection ${\bigcap\limits_k {{S_k}({\lambda _k})} }$ would be at most $1$-dimensional. The case that $ {{\lambda _k} \in \sigma ({H_k})}$ for all $k$ is a special case of what has been proved. So \ref{B1} has been proved.
\\ Consider the case that there exists common eigenkets for control Hamiltonians. Therefore, the intersection subspaces $S_t(\lambda_t)\cap S_s(\lambda_s)$ and $S_u(\lambda_u)\cap S_s(\lambda_s)$ (which are at most $(n-2)$-dimensional) may coincide. If there are $s$ common eigenkets, with the proposed statement, the intersection ${\bigcap\limits_k {{S_k}({\lambda _k})} }$ may be at most $(1+s)$-dimensional which proves \ref{B2}.
\end{proof}
The previous theorem revealed that for each set of $\left\{ {{\lambda _k} \in \mathbb{R},k = 1,...,n - 1} \right\}$, in the case that the control Hamiltonians do not have common eigenkets, the invariant set includes at most one quantum equivalence class. This result will help to provide further useful conditions for asymptotic stochastic stability. In the rest of this paper, assume that the control Hamiltonians do not share any eigenkets, which is not very restrictive.
\par Now consider the case that $\left| \psi  \right\rangle \in A$: Knowing that $\left\langle {\psi } \mathrel{\left | {\vphantom {\psi  {{\psi _f}}}} \right. \kern-\nulldelimiterspace} {{{\psi _f}}} \right\rangle  = 0$, let us study the invariance for this situation. For an infinitesimal time duration, the inner product would evolve as follows:
\begin{equation}
\operatorname{E}\left[ {\left\langle {{{\psi _f}}}
 \mathrel{\left | {\vphantom {{{\psi _f}} {\psi (dt)}}}
 \right. \kern-\nulldelimiterspace}
 {{\psi (dt)}} \right\rangle } \right] = \frac{{ - i}}{\hbar }{\sum\limits_k {{\alpha _k}\left( {\operatorname{Im} \left( {\left\langle {{{\psi _f}}}
 \mathrel{\left | {\vphantom {{{\psi _f}} {{H_k}\left| \psi  \right.}}}
 \right. \kern-\nulldelimiterspace}
 {{{H_k}\left| \psi  \right.}} \right\rangle } \right)} \right)} ^2}\left\langle {{{\psi _f}}}
 \mathrel{\left | {\vphantom {{{\psi _f}} {{H_k}\left| \psi  \right.}}}
 \right. \kern-\nulldelimiterspace}
 {{{H_k}\left| \psi  \right.}} \right\rangle dt.
\label{380}
\end{equation}
Thus, if $\operatorname{Im} \left( {\left\langle {{{\psi _f}}}
 \mathrel{\left | {\vphantom {{{\psi _f}} {{H_k}\left| \psi  \right.}}}
 \right. \kern-\nulldelimiterspace}
 {{{H_k}\left| \psi  \right.}} \right\rangle } \right) \ne 0$ for at least one $k$, the quantum trajectory is expected to escape the orthogonal subspace $ {\left[ {\left| {{\psi _f}} \right\rangle } \right]^ \bot }$. On the other hand, assume that there exists $\left| \psi  \right\rangle  \in {\left[ {\left| {{\psi _f}} \right\rangle } \right]^ \bot }$ such that for at least one $k$, $\left\langle {{{\psi _f}}} \mathrel{\left | {\vphantom {{{\psi _f}} {{H_k}\left| \psi  \right.}}} \right. \kern-\nulldelimiterspace} {{{H_k}\left| \psi  \right.}} \right\rangle  = r{e^{i\theta }} \ne 0$. Putting $\left| {\hat \psi } \right\rangle  = {e^{ - i\theta }}\left| \psi  \right\rangle $ (which also belongs to ${\left[ {\left| {{\psi _f}} \right\rangle } \right]^ \bot }$) gives $\operatorname{Im} \left( {\left\langle {{{\psi _f}}}
 \mathrel{\left | {\vphantom {{{\psi _f}} {{H_k}\left| {\hat \psi } \right.}}} \right. \kern-\nulldelimiterspace} {{{H_k}\left| {\hat \psi } \right.}} \right\rangle } \right) = 0$. Therefore, the problem in this situation reduces to finding the minimal set of control Hamiltonians $\left\{ {{H_k}} \right\}$ such that:
\begin{equation}
\label{390}
\left\{ {\left| \psi  \right\rangle  \in {{\left[ {\left| {{\psi _f}} \right\rangle } \right]}^ \bot }\left| {\exists k:\left\langle {{{\psi _f}}}
 \mathrel{\left | {\vphantom {{{\psi _f}} {{H_k}\left| \psi  \right.}}}
 \right. \kern-\nulldelimiterspace}
 {{{H_k}\left| \psi  \right.}} \right\rangle  = 0} \right.} \right\} = \emptyset.
\end{equation}
Now the following theorem can be stated:
\begin{theorem}
\label{TH40}
Consider the SSE (\ref{SSE}). Assume that A\ref{A10} to A\ref{A50} hold. Then, the quantum trajectories starting from ${\left[ {\left| {{\psi _f}} \right\rangle } \right]}^ \bot $, will escape it with probability $1$, i.e., the set $A$ in Theorem \ref{TH30} is not an invariant set.
\end{theorem}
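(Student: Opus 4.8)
The plan is to follow and sharpen the heuristic indicated around \ref{380}. Track the scalar process $Z(t)\doteq\langle\psi_f|\psi(t)\rangle$; since $|\psi(t)\rangle\in A=\left[\left|\psi_f\right\rangle\right]^{\bot}$ exactly when $Z(t)=0$, the claim is $\operatorname{P}\{\exists\,t:Z(t)\neq 0\}=1$. First I would contract the SSE \ref{SSE} with $\langle\psi_f|$: by A\ref{A20} and A\ref{A40} we have $\langle\psi_f|H_0=\lambda_{Hf}\langle\psi_f|$ and $\langle\psi_f|X=\lambda_{Xf}\langle\psi_f|$, so every contribution except the control drift carries an explicit factor $Z$, giving $dZ=\bigl(\tfrac{-i}{\hbar}\lambda_{Hf}Z+\tfrac{-i}{\hbar}\sum_k u_k\langle\psi_f|H_k|\psi\rangle-k(\lambda_{Xf}-\langle X\rangle)^2 Z\bigr)\,dt+\sqrt{2k}(\lambda_{Xf}-\langle X\rangle)\,Z\,dW$. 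On $A$ this collapses to $dZ=\tfrac{-i}{\hbar}\bigl(\sum_k u_k\langle\psi_f|H_k|\psi\rangle\bigr)\,dt$. Two consequences drive the whole proof: the Wiener term is tangent to $A$, so exiting $A$ is a purely deterministic (drift) event; and as soon as the trajectory reaches a state of $A$ at which $\sum_k u_k\langle\psi_f|H_k|\psi\rangle\neq 0$, it leaves $A$ with certainty.

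It remains to show that no trajectory can stay in $A$ for all time, that is, that $A$ contains no invariant set, and I would argue this by contradiction in LaSalle style. If $Z\equiv 0$ on $[0,\infty)$ with positive probability, then $\sum_k u_k\langle\psi_f|H_k|\psi\rangle\equiv 0$ along that path; as in the discussion preceding \ref{390}, for each state this forces $\{\langle\psi_f|H_k|\psi\rangle\}_k$ to be collinear in $\mathbb{C}$ (for the limiting phase of approach), so the path is confined to that collinearity locus $T_A\subseteq A$. By A\ref{A50} and Lemma \ref{L5} --- the linear-independence / non-coincidence mechanism already used in the proof of Theorem \ref{TH30} --- the map $|\psi\rangle\mapsto(\langle\psi_f|H_1|\psi\rangle,\dots,\langle\psi_f|H_{n-1}|\psi\rangle)$ is nondegenerate on $A$, so $T_A$ is a proper, lower-dimensional subset. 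Two equivalent ways to finish, both resting on A\ref{A50}: (i) inside $A$ the trajectory still evolves under the tangential parts of $\hat f,\hat g$, and by A\ref{A40} the noise direction $\hat g(|\psi\rangle)=\sqrt{2k}(X-\langle X\rangle)|\psi\rangle$ is genuinely nonzero (whenever $|\psi\rangle$ is not an eigenket of $X$) and generically transverse to $T_A$, so the $A$-restricted diffusion cannot be confined to $T_A$; (ii) differentiating the identity repeatedly --- equivalently, forcing the successive Ito drift and diffusion coefficients of $Z$ to vanish --- produces a growing family of phase-invariant constraints $\langle\psi_f|\,W(H_0,\dots,H_{n-1})\,|\psi\rangle=0$ for words $W$ in the Hamiltonians, which by A\ref{A50} force $|\psi\rangle\in\operatorname{span}\{|\psi_f\rangle\}$, i.e. $|\psi\rangle\in\left[\left|\psi_f\right\rangle\right]$, impossible for $|\psi\rangle\in A$ since $\left[\left|\psi_f\right\rangle\right]\cap A=\emptyset$.

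Putting the two steps together, from any $|\psi_0\rangle\in A$ the trajectory almost surely reaches a state of $A$ at which the control drift is nonzero, and it then leaves $A$ with certainty; hence it leaves $A$ with probability one, so $A$ is not invariant, which is Theorem \ref{TH40}. This also closes the dichotomy of Theorem \ref{TH30} by eliminating the branch $A$, exactly as the convergence analysis of the next section will require.

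The hard part is the step that rules out the trapped locus $T_A$. The control \ref{control} contains $\measuredangle\langle\psi|\psi_f\rangle$, which is undefined precisely on $A$, so "remaining in $A$" cannot be treated by naively substituting \ref{control} into $dZ$; one must work with the phase-invariant quantities $\langle\psi_f|H_k|\psi\rangle$ and the limiting controls along directions approaching $A$, and then lean on A\ref{A50} and Lemma \ref{L5}. Making route (ii) --- or, equivalently, the support statement in route (i) for the degenerate diffusion obtained by restricting \ref{SSE} to $A$ --- fully rigorous is the remaining technical point (a support-theorem matter), together with the low-dimensional degenerate cases where $A$ is a single equivalence class; everything else is routine bookkeeping with A\ref{A20}, A\ref{A40} and the Ito rule.
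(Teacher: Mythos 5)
Your first step---contracting \ref{SSE} with $\langle\psi_f|$, using A\ref{A20} and A\ref{A40} to strip every term except the control drift, and concluding that on $A$ the scalar $Z=\langle\psi_f|\psi\rangle$ moves only through $\frac{-i}{\hbar}\sum_k u_k\langle\psi_f|H_k|\psi\rangle\,dt$---is exactly the content of \ref{380} and the discussion preceding the theorem, and your derivation of it is cleaner than the paper's. The genuine gap is in how you finish. The paper does not finish with a support theorem or a LaSalle iteration: it finishes with one line of linear algebra. Writing $|\psi\rangle=\sum_{j=2}^n c_j|j\rangle$ for $|\psi\rangle\in{\left[{\left|{\psi_f}\right\rangle}\right]^\bot}$, the simultaneous conditions $\langle\psi_f|H_k|\psi\rangle=0$ for $k=1,\dots,n-1$ become the square homogeneous system $\sum_{j=2}^n\langle\psi_f|H_k|j\rangle\,c_j=0$, whose coefficient matrix the paper asserts to be nonsingular by A\ref{A50}; hence no nonzero state of $A$ annihilates all the functionals $\langle\psi_f|H_k(\cdot)$, and escape follows from \ref{380}. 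That algebraic step is the entire engine of the theorem, and it is the step your proposal never reaches: both of your routes (i) and (ii) are sketched and then explicitly deferred as ``the remaining technical point,'' so as written the proposal does not prove the statement.

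To be fair, your analysis of the trapped locus is sharper than the paper's. You correctly note that the control \ref{control} is singular on $A$ (the phase $\measuredangle\langle\psi|\psi_f\rangle$ is undefined there), and that with $w_k=\langle\psi_f|H_k|\psi\rangle$ the drift $\sum_k\alpha_k\operatorname{Im}(e^{i\theta}w_k)\,w_k$ can be made to vanish for a suitable phase $\theta$ whenever the $w_k$ are collinear in $\mathbb{C}$, not only when they all vanish; the paper silently identifies ``drift can vanish'' with ``all $w_k=0$'' and only excludes that smaller sublocus. So you have exposed a real issue rather than invented one---but by enlarging the bad set to the collinearity locus $T_A$ you now owe a genuinely dynamical argument (a support/hypoellipticity statement for the degenerate diffusion restricted to $A$, or the iterated-bracket computation behind your route (ii)), and nothing in the proposal supplies it. To land the theorem at the paper's level, fix a convention for the phase on $A$, reduce to the simultaneous system $\langle\psi_f|H_k|\psi\rangle=0$ for all $k$, and prove nonsingularity of $\bigl(\langle\psi_f|H_k|j\rangle\bigr)_{k,j}$ directly---noting that even there A\ref{A50}'s independence of the operators $H_k$ does not automatically give independence of the row vectors, so that step deserves its own justification.
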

\begin{proof}
Based on the statement above, it suffices to show that (\ref{390}) holds. For every ${\left| \psi  \right\rangle  \in {{\left[ {\left| {{\psi _f}} \right\rangle } \right]}^ \bot }}$, one may write $\left| \psi  \right\rangle  = \sum\limits_{j = 2}^n {{c_j}\left| j \right\rangle }$. Also, in this coordinate, each of the control Hamiltonians can be written as ${H_k} = \sum\limits_{h = 1}^n {\sum\limits_{l = 1}^n {{c_{khl}}\left| h \right\rangle } } \left\langle l \right|$ (of course with some restrictions on $c_{khl}$). Thus, we have:
\begin{equation}
\left\langle {{\psi _f}} \right|{H_k}\left| \psi  \right\rangle  = \sum\limits_{j = 2}^n {{c_{k1j}}{c_j}\left| j \right\rangle }. 
\end{equation}
Also (\ref{390}) holds if the system of linear equations
\begin{equation}
\left( {\begin{array}{*{20}{c}}
  {{c_{112}}}\hspace{0.5cm}{{c_{113}}} \hspace{0.25cm}\cdots \hspace{0.25cm}{{c_{11n}}} \\ 
   \vdots \hspace{0.9cm} \vdots \hspace{0.4cm}  \ddots \hspace{0.5cm} \vdots  \\ 
  {{c_{n - 112}}}\hspace{0.3cm}{{c_{n - 113}}} \cdots  {{c_{n - 11n}}} 
\end{array} } \right)\left( {\begin{array}{*{20}{c}}
  {{c_2}} \\ 
  {{c_3}} \\ 
   \vdots  \\ 
  {{c_n}} 
\end{array}} \right) = {\mathbf{0}}
\end{equation}
does not have a nontrivial solution. But this condition is always provided due to linear independence of $\left \{ H_k\right \}$ in A\ref{A50}.
\end{proof}
\begin{remark}
Based on the proof of Theorem \ref{TH40}, $m=n-1$ is the minimal number of independent control Hamiltonians, which is stated in A\ref{A50}.
\end{remark}
Now, based on these two stated theorems, one of the striking features of this theory can be stated. By Theorem \ref{TH30}, it is revealed that the right invariant set of quantum states, is at most $1$-dimensional for each choice of $\left\{\lambda_k\right\}$. On the other hand, Theorem \ref{TH40} revealed that set $A$ in Theorem \ref{TH30} is not right invariant. Now let us investigate set $B$. Consider that the quantum system is initiated in the quantum equivalence class $\left[ {\left| {{\psi _0}} \right\rangle } \right]$ almost surely and there exists a set $\left\{ {{\lambda _k} \in \mathbb{R} } \right\}$ such that for all $k$, $\langle {\psi _f}|{H_k} - {\lambda _k}I\left| \psi  \right\rangle  = 0$. In order to investigate the right invariance property, one must inspect whether or not the dynamics inspired by (\ref{SSE}) preserve the vanishing $LV\left( {\left| {{\psi ^{\left| {{\psi _0}} \right\rangle }}\left( t \right)} \right\rangle } \right)$. To this end, the following theorem shows that the invariant set is exclusively containing ${{\left[ {\left| {{\psi _f}} \right\rangle } \right]} }$.
\begin{theorem}
Assume that A\ref{A10} to A\ref{A50} hold. If the control Hamiltonians do not share any common eigenkets, then the invariant set of (\ref{SSE}) exclusively includes ${{\left[ {\left| {{\psi _f}} \right\rangle } \right]} }$.
\end{theorem}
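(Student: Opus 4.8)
The plan is a stochastic LaSalle‑type argument for the closed loop system, built entirely on the preceding results. By Lemma \ref{L4} the process $V(|\psi\rangle)$ is a nonnegative supermartingale along \ref{SSE} with the control \ref{control}, so $V(|\psi^{|\psi_0\rangle}(t)\rangle)$ converges almost surely; together with Theorem \ref{TH20} this forces every trajectory to converge almost surely into the largest positively invariant set $M\subseteq\{|\psi\rangle:LV(|\psi\rangle)=0\}=A\cup B$ of Theorem \ref{TH30}. No trajectory can stay in $A$ (Theorem \ref{TH40}), and by A\ref{A50} (the linear independence exploited in the proof of Theorem \ref{TH40}) $A\cap B=\emptyset$, so $M\subseteq B$ and $\langle\psi_f|\psi\rangle\neq 0$ on $M$. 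On $B$ the control \ref{control} vanishes identically, because the defining relation $\langle\psi_f|H_k|\psi\rangle=\lambda_k\langle\psi_f|\psi\rangle$ makes $\operatorname{Im}(e^{i\measuredangle\langle\psi|\psi_f\rangle}\langle\psi_f|H_k|\psi\rangle)=0$ for every $k$; hence on $M$ the dynamics \ref{SSE} reduces to the uncontrolled flow $d|\psi\rangle=\bigl(\frac{-i}{\hbar}H_0-k(X-\langle X\rangle)^2\bigr)|\psi\rangle\,dt+\sqrt{2k}\,(X-\langle X\rangle)|\psi\rangle\,dW$. It remains to show that $[|\psi_f\rangle]$ is the only equivalence class on which this flow never leaves $B$.

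Write $P_f=|\psi_f\rangle\langle\psi_f|$ and, for $k=1,\dots,n-1$, put $\varphi_k(|\psi\rangle)=\operatorname{Im}\bigl(\langle\psi|\psi_f\rangle\langle\psi_f|H_k|\psi\rangle\bigr)=\tfrac{1}{2i}\langle\psi|[P_f,H_k]|\psi\rangle$; these Hermitian quadratic forms vanish exactly on $\{LV=0\}$, in particular on $M$, and they satisfy \ref{110}, so the extended Itô formula \ref{EITO} applies to each of them along the uncontrolled flow. A direct computation gives a $dW$ coefficient of $d\varphi_k$ proportional to $\operatorname{Im}\bigl(\langle\psi|\psi_f\rangle\langle\psi_f|H_k(X-\langle X\rangle)|\psi\rangle\bigr)$, which by A\ref{A40} (so that $\langle\psi_f|X=\lambda_{Xf}\langle\psi_f|$) equals $\operatorname{Im}\bigl(\langle\psi|\psi_f\rangle\langle\psi_f|H_kX|\psi\rangle\bigr)$. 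A trajectory contained in $M$ keeps $\varphi_k\equiv 0$, so this coefficient must vanish on $M$; differentiating once more and using A\ref{A20} (so that $[P_f,H_0]=0$, whence the drift contributes only through $[P_f,[H_k,H_0]]$ and through double commutators of $[P_f,H_k]$ with $X$) one obtains, inductively, that for every word $\Pi$ in $H_0$ and $X$
\begin{equation}
\operatorname{Im}\bigl(\langle\psi|\psi_f\rangle\,\langle\psi_f|H_k\Pi|\psi\rangle\bigr)=0\qquad\text{on }M,\quad k=1,\dots,n-1 .
\label{plan1}
\end{equation}

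Finally, suppose $|\psi_0\rangle\in M$ with $|\psi_0\rangle\notin[|\psi_f\rangle]$; then $\langle\psi_0|\psi_f\rangle\neq 0$, and by A\ref{A50} no $H_k$ has $|\psi_f\rangle$ as an eigenket, so $\langle\psi_f|H_k$ is a nonzero covector for each $k$. Because $H_0$, $X$ and the control Hamiltonians together leave no proper subspace containing $|\psi_f\rangle$ invariant — this is where A\ref{A20}, A\ref{A40}, A\ref{A50} and the standing hypothesis that the $H_k$ have no common eigenket enter — the covectors $\{\langle\psi_f|H_k\Pi:\Pi\ \text{a word in }H_0,X\}$ span a family rich enough that one may select among them $\langle a|$ and $\langle b|$ with $\arg\langle a|\psi_0\rangle\not\equiv\arg\langle b|\psi_0\rangle\ (\mathrm{mod}\ \pi)$; then \ref{plan1} forces $\langle\psi_0|\psi_f\rangle\langle a|\psi_0\rangle$ and $\langle\psi_0|\psi_f\rangle\langle b|\psi_0\rangle$ to be real simultaneously, which is impossible unless $\langle\psi_0|\psi_f\rangle=0$ — a contradiction. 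Hence $M\cap B=[|\psi_f\rangle]$, and with $M\subseteq B$ we conclude $M=[|\psi_f\rangle]$, so the invariant set of \ref{SSE} exclusively contains $[|\psi_f\rangle]$.

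The crux is the last step: one must verify that successive differentiation really does produce a spanning family of covectors in \ref{plan1} — equivalently, that $[|\psi_f\rangle]$ is the only orbit lying in $B$ that is simultaneously an eigenket of $H_0$ and of $X$ — and this is precisely where the full strength of A\ref{A20}, A\ref{A40}, A\ref{A50} and the no‑common‑eigenket hypothesis must be combined. Everything before it is a mechanical application of the extended Itô formula \ref{EITO} together with the supermartingale and stopping‑time estimates of Lemma \ref{L4} and Theorems \ref{TH20}, \ref{TH30} and \ref{TH40}.
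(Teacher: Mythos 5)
Your reduction to the set $B$ of Theorem \ref{TH30} (via the supermartingale property of Lemma \ref{L4}, the decomposition $\{LV=0\}=A\cup B$, and the non-invariance of $A$ from Theorem \ref{TH40}) matches the paper's setup, and your observation that the control \ref{control} vanishes on $B$ so that only the uncontrolled flow acts there is exactly the paper's starting point. But the proof is not complete: the decisive step is asserted rather than proved, and you say so yourself ("one must verify that successive differentiation really does produce a spanning family of covectors"). Two concrete problems sit there. First, the inductive claim that invariance forces $\operatorname{Im}\bigl(\langle\psi|\psi_f\rangle\langle\psi_f|H_k\Pi|\psi\rangle\bigr)=0$ for \emph{every} word $\Pi$ in $H_0$ and $X$ is not established: the drift coefficient of $d\varphi_k$ under the extended It\^o formula \ref{EITO} contains the second-order term $g^\dag(\nabla^2\varphi_k)g$, i.e.\ expressions of the form $\langle\psi|(X-\langle X\rangle)[P_f,H_k](X-\langle X\rangle)|\psi\rangle$, together with state-dependent factors coming from $\langle X\rangle$ and its own increments; these do not reduce to the single-sided form $\langle\psi_f|H_k\Pi|\psi\rangle$ without further argument. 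Second, and more seriously, even granting that family of constraints, complex-linear spanning of the covectors $\langle\psi_f|H_k\Pi$ is \emph{not} sufficient for your contradiction: the constraints are preserved only under real-linear combinations, and a complex basis of covectors can perfectly well take values at a fixed $|\psi_0\rangle$ that all lie on one real line through the origin. So the existence of $\langle a|$, $\langle b|$ with $\arg\langle a|\psi_0\rangle\not\equiv\arg\langle b|\psi_0\rangle\ (\mathrm{mod}\ \pi)$ --- which is the entire nontrivial content of the theorem --- is left unproved.

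For comparison, the paper closes this gap with a single perturbation step rather than an infinite hierarchy: it demands that the defining relations $\langle\psi_f|(H_k-\lambda_k I)|\psi_0\rangle=0$ persist after one increment $dt$ with perturbed multipliers $\hat\lambda_k=\lambda_k+\delta_k$, splits the $dW$ and $dt$ contributions, and reduces the question to whether the two families of functionals $\langle\psi_f|(H_k-\lambda_k I)$ and $\langle\psi_f|\bigl(\tfrac{-i\,dt}{\hbar}(H_k-\lambda_k I)H_0-\delta_k I\bigr)$ can share the (at most one-dimensional) common kernel. Lemma \ref{L5} then forces either $H_0\propto I$ (impossible) or $|\psi_0\rangle$ an eigenket of $H_0$; the eigenkets of $H_0$ other than $[|\psi_f\rangle]$ lie in $[|\psi_f\rangle]^\perp$, which Theorem \ref{TH40} already excludes. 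If you want to salvage your route, you would need to prove the "richness" claim along these lines --- in effect re-deriving the paper's Lemma \ref{L5} argument --- so the honest conclusion is that your write-up is a correct reduction plus an unproven conjecture at the point where the theorem actually lives.
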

\begin{proof}
 Let us consider $LV\left( {\left| {{\psi ^{\left| {{\psi _0}} \right\rangle }}\left( dt \right)} \right\rangle } \right)$. By SSE (\ref{SSE}), one should find a set $\left\{ {{\hat \lambda _k} \in \mathbb{R} } \right\}$ such that: $$\left\langle {{\psi _f}} \right|\left( {{H_K} - {{\hat \lambda }_k}I} \right)\left( {I + \left( {\frac{{ - i}}{\hbar }{H_0} - k{{\left( {X - \left\langle X \right\rangle } \right)}^2}} \right)dt + \sqrt {2k} \left( {X - \left\langle X \right\rangle } \right)dW} \right)\left| {{\psi _0}} \right\rangle  = 0.$$
Note that $u_k(0)=0$ and thus the effect of $H_k$ vanishes. The presence of Wiener process implies that both of the following equalities must simultaneously hold:$$\left\langle {{\psi _f}} \right|\left( {{H_K} - {{\hat \lambda }_k}I} \right)\left( {I + \left( { - k{{\left( {X - \left\langle X \right\rangle } \right)}^2}} \right)dt + \sqrt {2k} \left( {X - \left\langle X \right\rangle } \right)dW} \right)\left| {{\psi _0}} \right\rangle  = 0$$ and 
\begin{equation}
\left\langle {{\psi _f}} \right|\left( {{H_K} - {{\hat \lambda }_k}I} \right)\left( {I + \frac{{ - i}}{\hbar }{H_0}dt} \right)\left| {{\psi _0}} \right\rangle  = 0.
\label{420}
\end{equation}
Also, it is intuitively obvious that $\lambda_k$ is uniformly continuous in $t$ and if written as $\hat\lambda_k=\lambda_k+\delta_k$ for real $\delta_k$, then $\delta_k \rightarrow0$  as $dt\rightarrow0$. Let us investigate (\ref{420}). Terms can be reordered to obtain: $$\left\langle {{\psi _f}} \right|\left( {{H_K} - {{ \lambda }_k}I} \right)\left( {\frac{{ - i}}{\hbar }{H_0}dt} \right)\left| {{\psi _0}} \right\rangle  = {\delta _k}\left\langle {{{\psi _f}}}
 \mathrel{\left | {\vphantom {{{\psi _f}} {{\psi _0}}}}
 \right. \kern-\nulldelimiterspace}
 {{{\psi _0}}} \right\rangle \left( {1 + \frac{{ - idt}}{\hbar }{\lambda _{HF}}} \right).$$ The second term in the RHS represents a second order perturbation, which is negligible as $dt\rightarrow0$:
\begin{equation}
\left\langle {{\psi _f}} \right|\left( {{H_K} - {{ \lambda }_k}I} \right)\left( {\frac{{ - i}}{\hbar }{H_0}dt} \right)\left| {{\psi _0}} \right\rangle  = {\delta _k}\left\langle {{{\psi _f}}}
 \mathrel{\left | {\vphantom {{{\psi _f}} {{\psi _0}}}}
 \right. \kern-\nulldelimiterspace}
 {{{\psi _0}}} \right\rangle .
\label{430}
\end{equation}
 Thus the question reduces to: If there is a set $\left\{\delta_k\in \mathbb{R}\right\}$, such that for the (at most) $1$-dimensional members of $B$ in Theorem \ref{TH40}, (\ref{430}) holds for all $k$?
\\ Define $\hat{H}_k \doteq \left( {{H_K} - {{ \lambda }_k}I} \right)\left( {\frac{{ - i}}{\hbar }{H_0}dt} \right)$. One should try to find the set $\left\{\delta_k\in \mathbb{R}\right\}$ such that $\left\langle {{\psi _f}} \right|{{\hat H}_K} - {\delta _k}I\left| {{\psi _0}} \right\rangle  = 0$ and $\left\langle {{{\psi _f}}}
 \mathrel{\left | {\vphantom {{{\psi _f}} {{H_k} - {\lambda _k}I\left| \psi  \right.}}}
 \right. \kern-\nulldelimiterspace}
 {{{H_k} - {\lambda _k}I\left| \psi_0  \right.}} \right\rangle  = 0$, for all $k$. This is similar to what was tried in the proof of  Theorem \ref{TH40} with the difference that in this case, there are $2(n-1)$ solution spaces one of which is at most $n-1$-dimensional. The assumption of not sharing any eigenkets for $H_k$'s implies that the common solution of $\left\langle {{{\psi _f}}} \mathrel{\left | {\vphantom {{{\psi _f}} {{H_k} - {\lambda _k}I\left| \psi  \right.}}} \right. \kern-\nulldelimiterspace} {{{H_k} - {\lambda _k}I\left| \psi_0  \right.}} \right\rangle  = 0$ includes at most one independent ket. In order to keep the same solution to be the solution for all of $\left\langle {{\psi _f}} \right|{{\hat H}_K} - {\delta _k}I\left| {{\psi _0}} \right\rangle  = 0$, (or in other words, two functionals $\left\langle {{\psi _f}} \right|({H_K} - {\lambda _k}I)$ and $\left\langle {{\psi _f}} \right|(\frac{{ - idt}}{\hbar }\left( {{H_K} - {\lambda _k}I} \right){H_0} - {\delta _k}I)$ share the same kernel), by Lemma \ref{L5}, there are only two possibilities: 
\begin{enumerate}
\item{ Whether $H_0=cI$ for some scalar complex $c$, which is impossible,}
\item{ $\left| {{\psi _0}} \right\rangle$ is an eigenket for $H_0$ which implies $\delta_k=0$.}
\end{enumerate}
 Regarding this explanation, the only possibilities to be included in the invariant set are the eigenkets of $H_k$ making $LV\left( {\left| {{\psi }} \right\rangle }\right)=0$. On the other hand, even if $H_0$ is degenerate (the stationary states can be a super-position of eigenstates with the same enery level), all of the stationary states apart from ${{\left[ {\left| {{\psi _f}} \right\rangle } \right]}}$  have to include in ${{\left[ {\left| {{\psi _f}} \right\rangle } \right]}^ \bot }$. However, by Theorem \ref{TH40}, ${{\left[ {\left| {{\psi _f}} \right\rangle } \right]}^ \bot }$ is not invariant. By the virtue that ${\left| {{\psi _f}} \right\rangle  }={\left| {{1}} \right\rangle}$ is degenerate, no stationary states can be in the superposition of the eigenspace corresponding to ${\left| {{1}} \right\rangle}$ and the eigenspaces in ${{\left[ {\left| {{\psi _f}} \right\rangle } \right]}^ \bot }$. In light of the facts outlined above and by the use of \ref{B3},  the right invariant set is solely restricted to ${{\left[ {\left| {{\psi _f}} \right\rangle } \right]}}$.
\end{proof}
\begin{remark}
It is worth noting that this proof implies that \textit{even if $H_0$ is degenerate in the eigenspaces except for ${\left| {{1}} \right\rangle}$ }, if A\ref{A10} to A\ref{A50} hold and $\left\{H_k\right\}$ do not share any eigenkets, the $\Omega$-limit set merely includes ${{\left[ {\left| {{\psi _f}} \right\rangle } \right]}}$. Although the non-degeneracy condition for $H_0$ was essential in almost all of the reported works on Lyapunov control of Schr\"odinger equation, in this paper, it was waived by the virtue of the proposed theory.
\end{remark}

\section{Computer experiment}\label{p14}
In order to qualify and illustrate the proposed theory, it is applied to a $2-$level quantum system. Although these systems are among the simplest quantum systems, they are of variety of applications in quantum computing techniques. Simply assume that the Hamiltonian is of the following form:
\[H = {\sigma _z} + u_1(t){\sigma _y}.\]
This form may model a Fermion in an orthogonal electromagnetic field $B = {B_z}\hat z + {B_y}(t)\hat y$, the state of an artificial atom in a superconducting qubit or many other $2-$level quantum systems.
\\ Also, put $X = {\sigma _z}$, assign the measurement strength $k = 0.1$ and ${\alpha _1} = 5$. Our aim is to manipulate and switch this quantum system between $\left|  \uparrow  \right\rangle$ and $\left|  \downarrow  \right\rangle$ which are the eigenstates of $\sigma _z$, while we are continuously measuring this observable. These conditions obey A\ref{A10} to A\ref{A50} and the proposed theory suggests that the final desired states are asymptotically stable. The simple paths of a quantum rajectoriy are of the form $\left| {\psi (t)} \right\rangle  = \left( {\begin{array}{*{20}{c}}
  {{c_1}(t)} \\ 
  {{c_2}(t)} 
\end{array}} \right)$ in the Pauli notation.
\par Figure \ref{c1c2} illustrates a simple path of the quantum trajectory driven by the proposed method. The system is switched from $\left|  \downarrow  \right\rangle  \triangleq \left( {\begin{array}{*{20}{c}}
  0 \\ 
  1 
\end{array}} \right)$ to $\left|  \uparrow  \right\rangle  \triangleq \left( {\begin{array}{*{20}{c}}
  1 \\ 
  0 
\end{array}} \right)$.  Figure \ref{contrl} shows the control signal $u_1(t)$. Figure \ref{Lyapunov} shows the Lyapunov value for the sample path. The expected value of the observable $X=\sigma_z$ is shown in Figure \ref{Xexpec}. Also, the simple path generated by the SSE and the proposed manipulation algorithm is shown on the Bloch sphere in Figure \ref{blochsphere}.
\section{Conclusion and further research} \label{p15}
Continuous measurement is certainly a groundbreaking point to feedback control of quantum systems. Mesoscopic quantum systems are competitively founding their way to quantum computing applications. Among the main influencing aspects of these systems to make them implementable, are their capability to get written, controlled and read-out easily and fast due to their short coherence time. This paper, considers and takes into account all of these three aspects. As a result, homodyning, as a promising way to continuous measurement is getting attention for qubit manipulation purposes. Thus, the need for a capable stabilization algorithm is inevitable in this area. Also, these systems are easily modelled and driven by the SSE when their dynamics is unravelled.
\par Our goal was to propose a stabilization algorithm for quantum systems when they are continuously measured. Up to some conditions on the measurement observable and the Hamiltonian of the system, this goal was achieved. Fortunately, the conditions are not restrictive and are satisfied in most of experimental setups; for instance, as shown in the computer experiment section, an stochastic $2-$level quantum system can be stabilized with a single control manipulator. Also, this algorithm, likewise other Lyapunov-based algorithms, is robust to small dynamical perturbations and thus, the control history can be used in off-line manner. On of the main advantages of this theory is that it works for degenerate Hamiltonian. Despite some existing algorithms in the literature that stabilize deterministic Schr\"odinger equation, which require the Hamiltonian to be $\lambda-$degenerate (which is more restrictive than degeneracy condition), this theory does not require degeneracy of the Hamiltonian.
\par Further research will focus on extending the proposed theory to output feedback scheme. Also, another potential area would be the applications of this theory to quantum computing frameworks.

\begin{figure}
\centering
\includegraphics[width=0.8\textwidth]{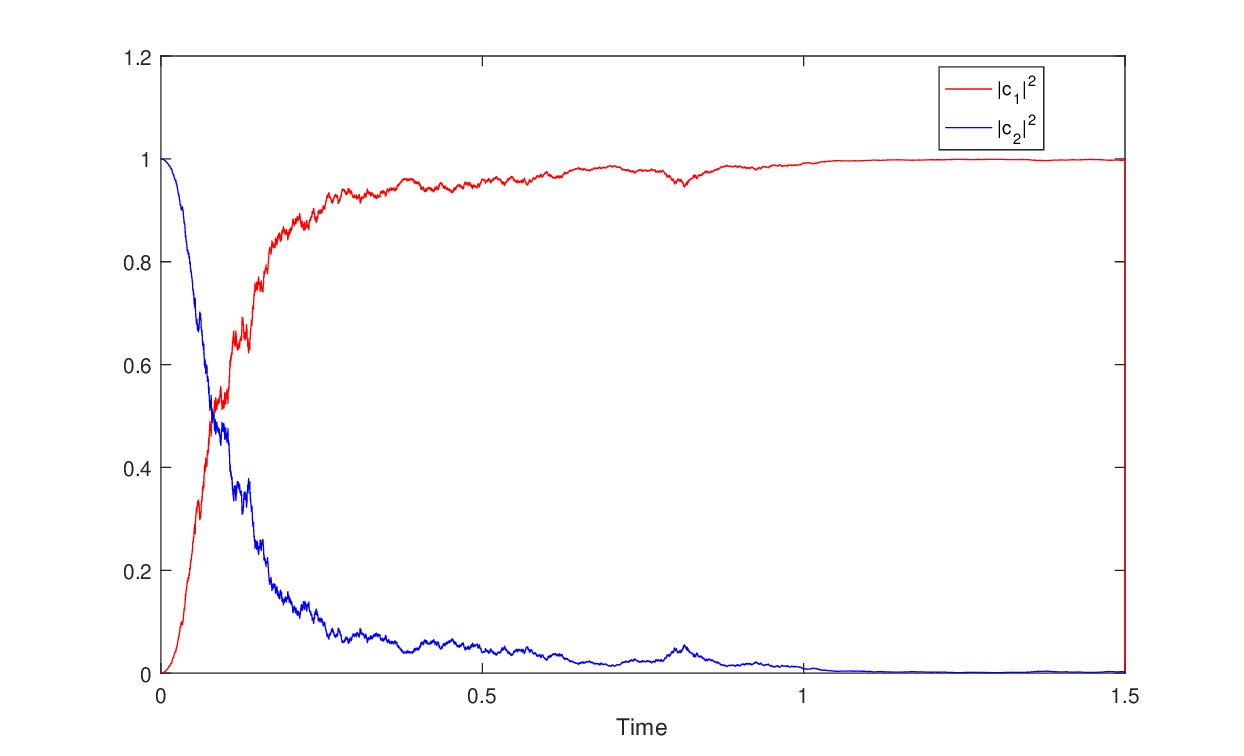}
\caption{Transition probabilities. The system rotates from $\left|  \downarrow  \right\rangle$ to $\left|  \uparrow  \right\rangle$. As it is shown, the proposed control algorithm drives the stochastic system, asymptotically from the initial state to the desired final state. This figure illustrates how this algorithm is useful in order to design controlled quantum gates. This is an example of simple CNOT gate.}
\label{c1c2}
\end{figure}

\begin{figure}
\centering
\includegraphics[width=0.8\textwidth]{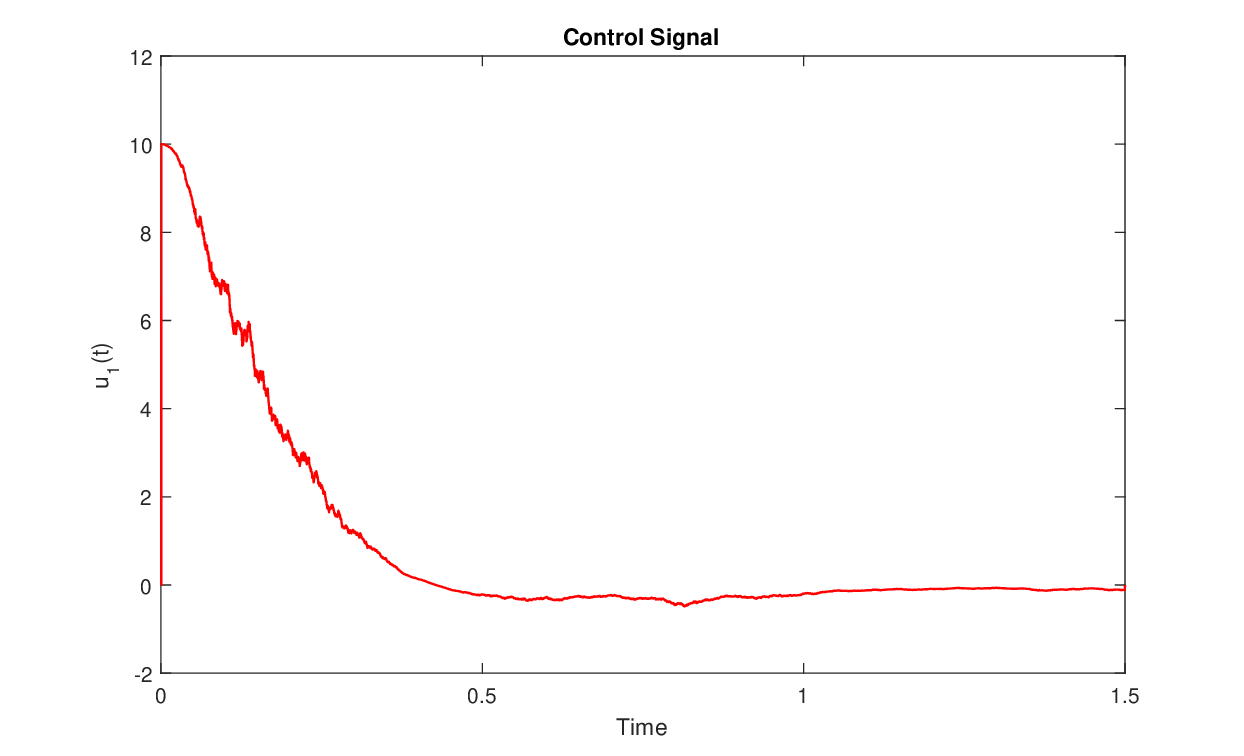}
\caption{Control signal $u_1(t)$. At the beginning, the control effort is high in order to decrease the Lyapunov value. Also, the effect of quantum jumps are substantial in the meanwhile, this is because at the beginning and the end of the transition, the quantum states are eigenkets of $X$.}
\label{contrl}
\end{figure}

\begin{figure}
\centering
\includegraphics[width=0.8\textwidth]{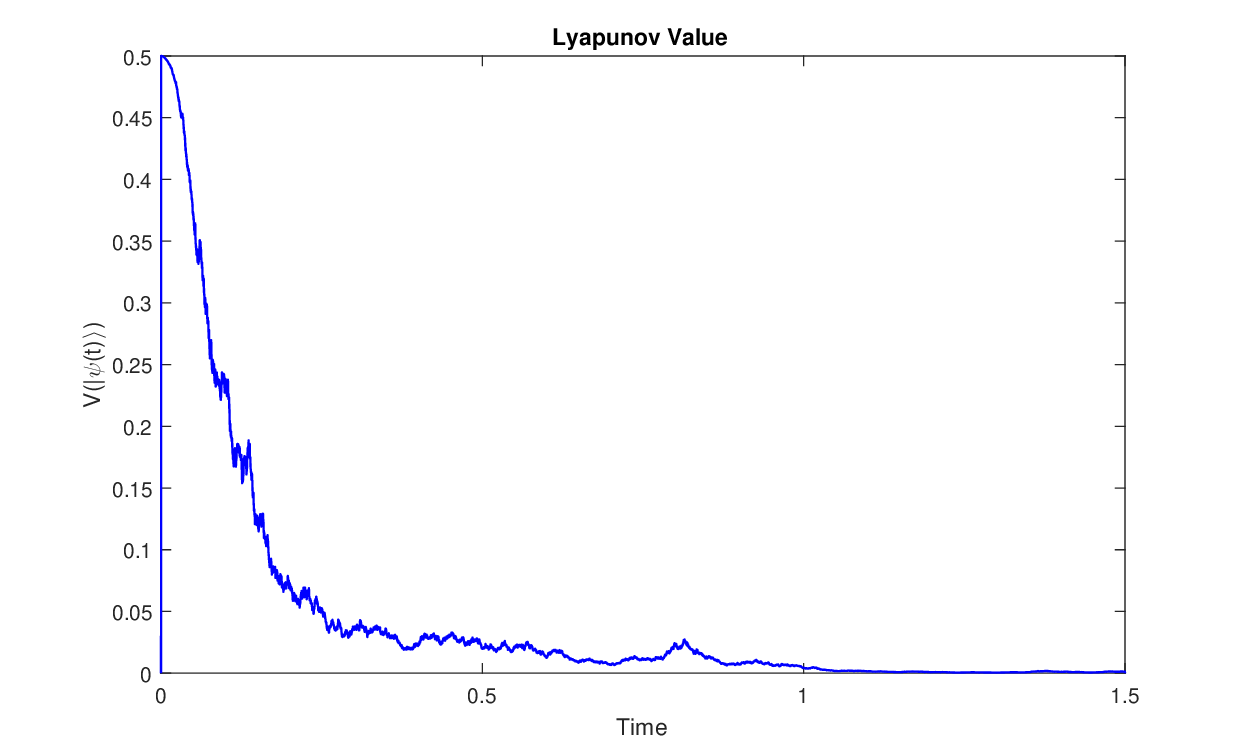}
\caption{Lyapunov value $V(\left| \psi (t) \right\rangle )$. The Lyapunov value is increasing stochastically as predicted. Also, the asymptotic stability ensures no invariant set in the meanwhile.}
\label{Lyapunov}
\end{figure}

\begin{figure}
\centering
\includegraphics[width=0.8\textwidth]{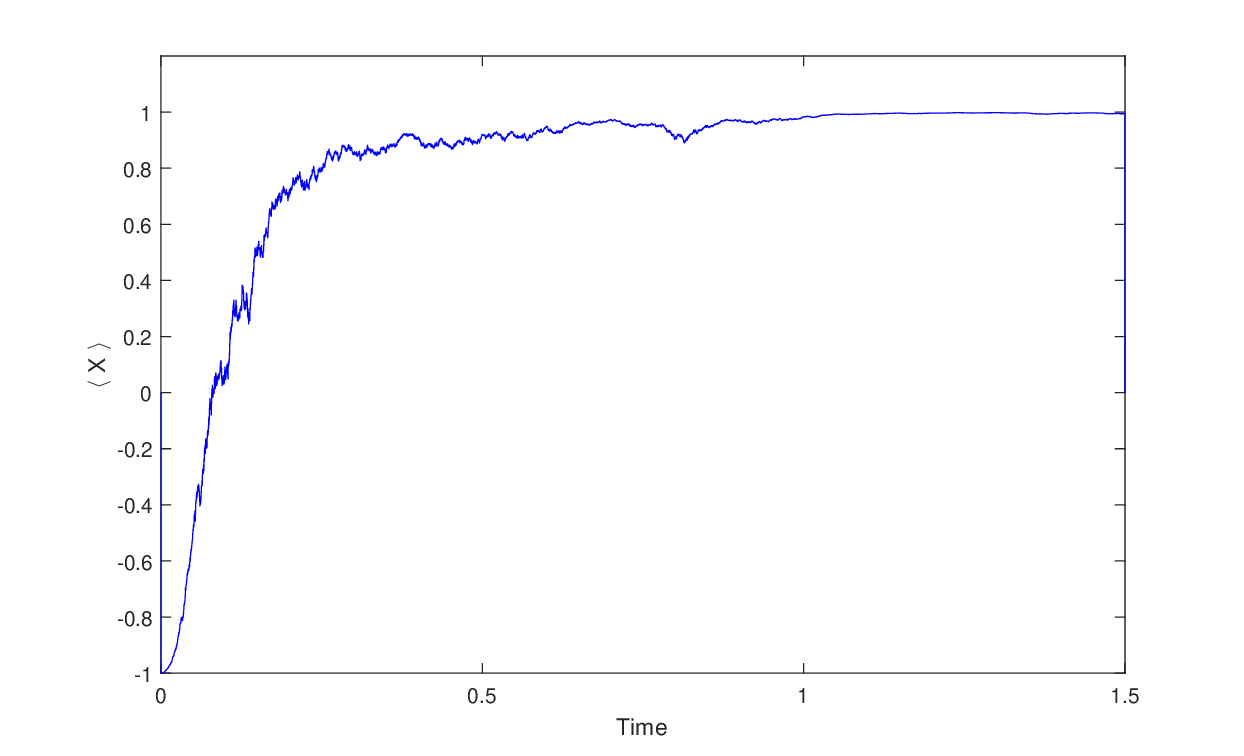}
\caption{Expectation value $\left\langle X \right\rangle$. This is also the measurement record and shows the extracted information about the state of the system.}
\label{Xexpec}
\end{figure}

\begin{figure}
\centering
\includegraphics[width=1\textwidth]{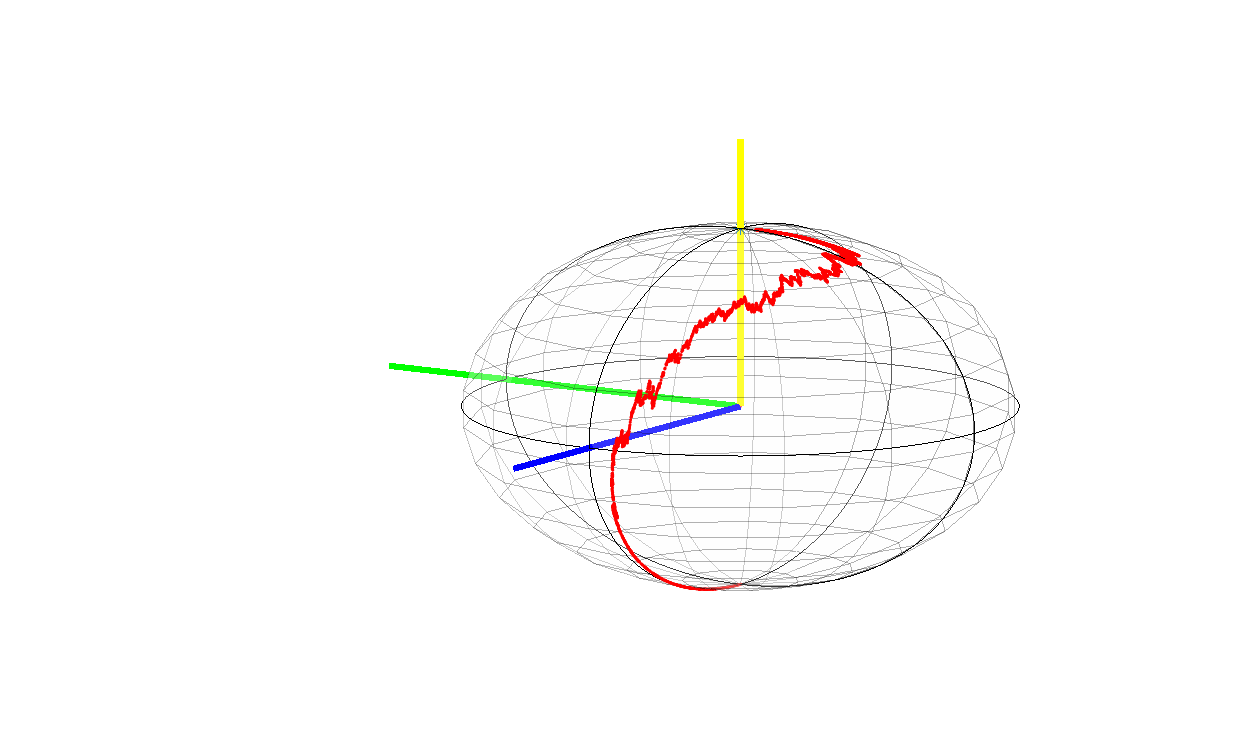}
\caption{A simple path of state tractores on the Bloch sphere. Yellow, blue and green axes are $Z+$, $Y+$ and $X+$ respectively. The simple path transfers from $\left|  \downarrow  \right\rangle$ to $\left|  \uparrow  \right\rangle$. Quantum jumps in the meanwhile are illustrated.}
\label{blochsphere}
\end{figure}

\bibliographystyle{plainnat}
\bibliography{bib1}

\begin{thebibliography}{43}
\providecommand{\natexlab}[1]{#1}
\providecommand{\url}[1]{\texttt{#1}}
\expandafter\ifx\csname urlstyle\endcsname\relax
  \providecommand{\doi}[1]{doi: #1}\else
  \providecommand{\doi}{doi: \begingroup \urlstyle{rm}\Url}\fi

\bibitem[Aharonov et~al.(1988)Aharonov, Albert, and
  Vaidman]{aharonov1988result}
Yakir Aharonov, David~Z Albert, and Lev Vaidman.
\newblock How the result of a measurement of a component of the spin of a
  spin-1/2 particle can turn out to be 100.
\newblock \emph{Physical review letters}, 60\penalty0 (14):\penalty0 1351,
  1988.

\bibitem[Breuer and Petruccione(2002)]{breuer2002theory}
Heinz-Peter Breuer and Francesco Petruccione.
\newblock \emph{The theory of open quantum systems}.
\newblock Oxford University Press on Demand, 2002.

\bibitem[Brun(2002)]{brun2002simple}
Todd~A Brun.
\newblock A simple model of quantum trajectories.
\newblock \emph{American Journal of Physics}, 70\penalty0 (7):\penalty0
  719--737, 2002.

\bibitem[Busch(2009)]{busch2009no}
Paul Busch.
\newblock No information without disturbance: Quantum limitations of
  measurement.
\newblock In \emph{Quantum Reality, Relativistic Causality, and Closing the
  Epistemic Circle}, pages 229--256. Springer, 2009.

\bibitem[Cardona et~al.(2018)Cardona, Sarlette, and
  Rouchon]{cardona2018exponential}
Gerardo Cardona, Alain Sarlette, and Pierre Rouchon.
\newblock Exponential stochastic stabilization of a two-level quantum system
  via strict lyapunov control.
\newblock In \emph{2018 IEEE Conference on Decision and Control (CDC)}, pages
  6591--6596. IEEE, 2018.

\bibitem[Cardona et~al.(2020)Cardona, Sarlette, and
  Rouchon]{cardona2020exponential}
Gerardo Cardona, Alain Sarlette, and Pierre Rouchon.
\newblock Exponential stabilization of quantum systems under continuous
  non-demolition measurements.
\newblock \emph{Automatica}, 112:\penalty0 108719, 2020.

\bibitem[Chen et~al.(1995)Chen, Chen, and Hsu]{chen1995linear}
Goong Chen, Guanrong Chen, and Shih-Hsun Hsu.
\newblock \emph{Linear stochastic control systems}, volume~3.
\newblock CRC press, 1995.

\bibitem[d'Alessandro(2007)]{d2007introduction}
Domenico d'Alessandro.
\newblock \emph{Introduction to quantum control and dynamics}.
\newblock CRC press, 2007.

\bibitem[DiVincenzo(1995)]{divincenzo1995quantum}
David~P DiVincenzo.
\newblock Quantum computation.
\newblock \emph{Science}, 270\penalty0 (5234):\penalty0 255--261, 1995.

\bibitem[Doob(1953)]{doob1953stochastic}
Joseph~L Doob.
\newblock \emph{Stochastic processes}, volume~7.
\newblock Wiley New York, 1953.

\bibitem[Dowling and Milburn(2003)]{dowling2003quantum}
Jonathan~P Dowling and Gerard~J Milburn.
\newblock Quantum technology: the second quantum revolution.
\newblock \emph{Philosophical Transactions of the Royal Society of London A:
  Mathematical, Physical and Engineering Sciences}, 361\penalty0
  (1809):\penalty0 1655--1674, 2003.

\bibitem[Feynman(1982)]{feynman1982simulating}
Richard~P Feynman.
\newblock Simulating physics with computers.
\newblock \emph{International journal of theoretical physics}, 21\penalty0
  (6):\penalty0 467--488, 1982.

\bibitem[Ghaeminezhad and Cong(2018)]{ghaeminezhad2018preparation}
Nourallah Ghaeminezhad and Shuang Cong.
\newblock Preparation of hadamard gate for open quantum systems by the lyapunov
  control method.
\newblock \emph{IEEE/CAA Journal of Automatica Sinica}, 5\penalty0
  (3):\penalty0 733--740, 2018.

\bibitem[Grivopoulos and Bamieh(2003)]{grivopoulos2003lyapunov}
Symeon Grivopoulos and Bassam Bamieh.
\newblock Lyapunov-based control of quantum systems.
\newblock In \emph{Decision and Control, 2003. Proceedings. 42nd IEEE
  Conference on}, volume~1, pages 434--438. IEEE, 2003.

\bibitem[Gross et~al.(2018)Gross, Caves, Milburn, and Combes]{gross2018qubit}
Jonathan~A Gross, Carlton~M Caves, Gerard~J Milburn, and Joshua Combes.
\newblock Qubit models of weak continuous measurements: markovian conditional
  and open-system dynamics.
\newblock \emph{Quantum Science and Technology}, 3\penalty0 (2):\penalty0
  024005, 2018.

\bibitem[Jacobs(2014)]{jacobs2014quantum}
Kurt Jacobs.
\newblock \emph{Quantum measurement theory and its applications}.
\newblock Cambridge University Press, 2014.

\bibitem[Jacobs and Steck(2006)]{jacobs2006straightforward}
Kurt Jacobs and Daniel~A Steck.
\newblock A straightforward introduction to continuous quantum measurement.
\newblock \emph{Contemporary Physics}, 47\penalty0 (5):\penalty0 279--303,
  2006.

\bibitem[Jie et~al.(2018)Jie, Yuanhao, and Xiaonong]{jie2018stabilizing}
WEN Jie, SHI Yuanhao, and LU~Xiaonong.
\newblock Stabilizing a class of mixed states for stochastic quantum systems
  via switching control.
\newblock \emph{Journal of the Franklin Institute}, 355\penalty0 (5):\penalty0
  2562--2582, 2018.

\bibitem[Jozsa(2007)]{jozsa2007complex}
Richard Jozsa.
\newblock Complex weak values in quantum measurement.
\newblock \emph{Physical Review A}, 76\penalty0 (4):\penalty0 044103, 2007.

\bibitem[Khalil(1996)]{khalil1996noninear}
Hassan~K Khalil.
\newblock Noninear systems.
\newblock \emph{Prentice-Hall, New Jersey}, 2\penalty0 (5):\penalty0 5--1,
  1996.

\bibitem[Khasminskii(2011)]{khasminskii2011stochastic}
Rafail Khasminskii.
\newblock \emph{Stochastic stability of differential equations}, volume~66.
\newblock Springer Science \& Business Media, 2011.

\bibitem[Kuang and Cong(2008)]{kuang2008lyapunov}
Sen Kuang and Shuang Cong.
\newblock Lyapunov control methods of closed quantum systems.
\newblock \emph{Automatica}, 44\penalty0 (1):\penalty0 98--108, 2008.

\bibitem[Kuang et~al.(2018)Kuang, Dong, and Petersen]{kuang2018lyapunov}
Sen Kuang, Daoyi Dong, and Ian~R Petersen.
\newblock Lyapunov control of quantum systems based on energy-level
  connectivity graphs.
\newblock \emph{IEEE Transactions on Control Systems Technology}, 2018.

\bibitem[Kushner(1967)]{kushner1967stochastic}
Harold~J Kushner.
\newblock Stochastic stability and control.
\newblock Technical report, BROWN UNIV PROVIDENCE RI, 1967.

\bibitem[Liu et~al.(2019)Liu, Dong, Petersen, and Yonezawa]{liu2019filter}
Yanan Liu, Daoyi Dong, Ian~R Petersen, and Hidehiro Yonezawa.
\newblock Filter-based feedback control for a class of markovian open quantum
  systems.
\newblock \emph{IEEE Control Systems Letters}, 3\penalty0 (3):\penalty0
  565--570, 2019.

\bibitem[Mirrahimi et~al.(2005)Mirrahimi, Rouchon, and
  Turinici]{mirrahimi2005lyapunov}
Mazyar Mirrahimi, Pierre Rouchon, and Gabriel Turinici.
\newblock Lyapunov control of bilinear schr{\"o}dinger equations.
\newblock \emph{Automatica}, 41\penalty0 (11):\penalty0 1987--1994, 2005.

\bibitem[Muhonen et~al.(2018)Muhonen, Dehollain, Laucht, Simmons, Kalra,
  Hudson, Dzurak, Morello, Jamieson, McCallum, et~al.]{muhonen2018coherent}
JT~Muhonen, JP~Dehollain, A~Laucht, S~Simmons, R~Kalra, FE~Hudson, AS~Dzurak,
  A~Morello, DN~Jamieson, JC~McCallum, et~al.
\newblock Coherent control via weak measurements in p 31 single-atom electron
  and nuclear spin qubits.
\newblock \emph{Physical Review B}, 98\penalty0 (15):\penalty0 155201, 2018.

\bibitem[Pfender et~al.(2019)Pfender, Wang, Sumiya, Onoda, Yang, Dasari,
  Neumann, Pan, Isoya, Liu, et~al.]{pfender2019high}
Matthias Pfender, Ping Wang, Hitoshi Sumiya, Shinobu Onoda, Wen Yang, Durga
  Bhaktavatsala~Rao Dasari, Philipp Neumann, Xin-Yu Pan, Junichi Isoya, Ren-Bao
  Liu, et~al.
\newblock High-resolution spectroscopy of single nuclear spins via sequential
  weak measurements.
\newblock \emph{Nature communications}, 10\penalty0 (1):\penalty0 594, 2019.

\bibitem[Prugovecki(1982)]{prugovecki1982quantum}
Eduard Prugovecki.
\newblock \emph{Quantum mechanics in Hilbert space}, volume~92.
\newblock Academic Press, 1982.

\bibitem[Qamar and Cong(2019)]{qamar2019observer}
Shahid Qamar and Shuang Cong.
\newblock Observer-based feedback control of two-level open stochastic quantum
  system.
\newblock \emph{Journal of the Franklin Institute}, 356\penalty0 (11):\penalty0
  5675--5691, 2019.

\bibitem[Ran et~al.(2019)Ran, Shi, Yang, Song, and Xia]{ran2019error}
Du~Ran, Zhi-Cheng Shi, Zhen-Biao Yang, Jie Song, and Yan Xia.
\newblock Error correction of quantum system dynamics via measurement--feedback
  control.
\newblock \emph{Journal of Physics B: Atomic, Molecular and Optical Physics},
  52\penalty0 (16):\penalty0 165501, 2019.

\bibitem[Renner(2008)]{renner2008security}
Renato Renner.
\newblock Security of quantum key distribution.
\newblock \emph{International Journal of Quantum Information}, 6\penalty0
  (01):\penalty0 1--127, 2008.

\bibitem[Sastry(2013)]{sastry2013nonlinear}
Shankar Sastry.
\newblock \emph{Nonlinear systems: analysis, stability, and control},
  volume~10.
\newblock Springer Science \& Business Media, 2013.

\bibitem[Shojaee et~al.(2018)Shojaee, Jackson, Riofr{\'\i}o, Kalev, and
  Deutsch]{shojaee2018optimal}
Ezad Shojaee, Christopher~S Jackson, Carlos~A Riofr{\'\i}o, Amir Kalev, and
  Ivan~H Deutsch.
\newblock Optimal pure-state qubit tomography via sequential weak measurements.
\newblock \emph{Physical review letters}, 121\penalty0 (13):\penalty0 130404,
  2018.

\bibitem[Shor(1999)]{shor1999polynomial}
Peter~W Shor.
\newblock Polynomial-time algorithms for prime factorization and discrete
  logarithms on a quantum computer.
\newblock \emph{SIAM review}, 41\penalty0 (2):\penalty0 303--332, 1999.

\bibitem[Shuang and KUANG(2007)]{shuang2007quantum}
Cong Shuang and Sen KUANG.
\newblock Quantum control strategy based on state distance.
\newblock \emph{Acta Automatica Sinica}, 33\penalty0 (1):\penalty0 28--31,
  2007.

\bibitem[Sontag(2013)]{sontag2013mathematical}
Eduardo~D Sontag.
\newblock \emph{Mathematical control theory: deterministic finite dimensional
  systems}, volume~6.
\newblock Springer Science \& Business Media, 2013.

\bibitem[Vijay et~al.(2012)Vijay, Macklin, Slichter, Weber, Murch, Naik,
  Korotkov, and Siddiqi]{vijay2012stabilizing}
R~Vijay, Chris Macklin, DH~Slichter, SJ~Weber, KW~Murch, Ravi Naik, Alexander~N
  Korotkov, and Irfan Siddiqi.
\newblock Stabilizing {Rabi} oscillations in a superconducting qubit using
  quantum feedback.
\newblock \emph{Nature}, 490\penalty0 (7418):\penalty0 77--80, 2012.

\bibitem[Viola et~al.(1999)Viola, Knill, and Lloyd]{viola1999dynamical}
Lorenza Viola, Emanuel Knill, and Seth Lloyd.
\newblock Dynamical decoupling of open quantum systems.
\newblock \emph{Physical Review Letters}, 82\penalty0 (12):\penalty0 2417,
  1999.

\bibitem[Wang et~al.(2014)Wang, Hou, Yi, Dong, and Petersen]{wang2014optimal}
LC~Wang, SC~Hou, XX~Yi, Daoyi Dong, and Ian~R Petersen.
\newblock Optimal lyapunov quantum control of two-level systems: Convergence
  and extended techniques.
\newblock \emph{Physics Letters A}, 378\penalty0 (16-17):\penalty0 1074--1080,
  2014.

\bibitem[Wang and Schirmer(2010)]{wang2010analysis}
Xiaoting Wang and Sophie~G Schirmer.
\newblock Analysis of {Lyapunov} method for control of quantum states.
\newblock \emph{IEEE Transactions on Automatic control}, 55\penalty0
  (10):\penalty0 2259--2270, 2010.

\bibitem[Wiseman(1994)]{wiseman1994quantum}
HM~Wiseman.
\newblock Quantum theory of continuous feedback.
\newblock \emph{Physical Review A}, 49\penalty0 (3):\penalty0 2133, 1994.

\bibitem[Wiseman(1996)]{wiseman1996quantum}
HM~Wiseman.
\newblock Quantum trajectories and quantum measurement theory.
\newblock \emph{Quantum and Semiclassical Optics: Journal of the European
  Optical Society Part B}, 8\penalty0 (1):\penalty0 205, 1996.

\end{thebibliography}



\end{document}